\documentclass[journal,twoside,web,letter]{ieeecolor_noruler}
\usepackage{generic}
\usepackage{cite}
\usepackage{amsmath,amssymb,amsfonts}
\usepackage{graphicx}
\usepackage{xcolor}
\usepackage{algorithm,algorithmic}
\usepackage{hyperref}
\hypersetup{hidelinks=true}
\usepackage{textcomp}
\usepackage{wrapfig}
\usepackage{mathrsfs}

\newtheorem{corol}{Corollary}
\newtheorem{definition}{Definition}

\newtheorem{remark}{Remark}
\newtheorem{thm}{Theorem}

\newcommand{\im}{\operatorname{im}}
\newcommand{\Gr}{\operatorname{Gr}}

\DeclareMathOperator{\rank}{rank}

\DeclareMathOperator{\row}{row}

\def\BibTeX{{\rm B\kern-.05em{\sc i\kern-.025em b}\kern-.08em
        T\kern-.1667em\lower.7ex\hbox{E}\kern-.125emX}}
\begin{document}
\def\ZZ{{\mathbb Z}}
\def\RR{{\mathbb R}}
\def\NN{{\mathbb N}}
\def\CC{{\mathbb C}}

\title{Complete Characterization of Minimum-Order Functional Observers from Darouach to Luenberger}
\author{Tyrone Fernando
\thanks{T.~Fernando is with the Department of Electrical, Electronic and Computer Engineering, University of Western Australia (UWA), 35 Stirling Highway, Crawley, WA 6009, Australia. (email: tyrone.fernando@uwa.edu.au)}}

\maketitle

\begin{abstract}
	This paper gives an order-by-order characterization of linear functional
	observers between the minimum dimension permitted by the algebraic observer
	condition and the upper endpoint determined by functional observability.
	Functional observability indices determine the minimum algebraically
	admissible order
	$$
	q_0=\sum_{j=1}^r\eta_j,
	$$
	where $r$ is the number of independent functionals to be estimated and
	$\eta_j$ is the functional observability index associated with the $j$th
	functional. For an $n$-dimensional system with $p$ independent measured outputs, let
	$n_0$ denote the rank of the observability matrix. Under functional
	observability, the upper endpoint of the functional-observer order range
	is $n_0-p$. The algebraic
	and spectral observer conditions are characterized at each prescribed
	order between these two endpoints. A nullspace representation in the
	original state space yields rank and matrix-pencil characterizations of
	the algebraic and spectral conditions, respectively. The minimum observer
	order is obtained as the smallest spectrally feasible order in the
	resulting order spectrum. The two endpoints include the classical
	Darouach and Luenberger orders as special cases: the lower endpoint
	reduces to the Darouach order when $q_0=r$, while, under complete
	observability, the upper endpoint reduces to the Luenberger order $n-p$.
\end{abstract}

\begin{IEEEkeywords}
	Functional observers, functional observability indices, minimum-order observer,
	functional observability, Luenberger observer, Darouach observer.
\end{IEEEkeywords}

\section{Introduction}
\label{sec:introduction}

Functional observers estimate prescribed linear functions of the state without
necessarily reconstructing the complete state. Their dimension may therefore
be substantially smaller than that required for state observation \cite{Luenberger1966}. The
determination of the smallest possible observer dimension has consequently
received sustained attention; see, among others,
\cite{FortmannWilliamson1972, Murdoch1973, RomanBullock1975, MooreLedwich1975, FurutaKawaji1977, Sirisena1979, darouach2000, ref9n, ref10, ref10n, RotellaZambettakis2011, HamdounEtAl2017}. More recently, functional-observer theory has been extended
to large-scale and interconnected systems~\cite{ref14nb, ref14nbb, ref14nd},
networked systems~\cite{ref14n, ref7n, ref5ty, mont}, sampled-data
systems~\cite{ref12n} and nonlinear systems \cite{ref13nb, nonl1, nonl3}.

Consider the linear system
$$
\dot x(t)=Ax(t)+Bu(t),\qquad y(t)=Cx(t),
$$
and the prescribed functional
$$
z(t)=Lx(t).
$$
Let $n$ denote the state dimension, $p=\rank C$, and let $r$ denote the
number of independent functional rows modulo the measured output.
The dimension $r$ is the natural lower bound on the observer
order and is the order of the classical Darouach
observer~\cite{darouach2000}.

In our previous work~\cite{fernando2026}, the functional observability 
indices were used to determine the minimum dimension permitted by the
algebraic observer condition. If $\eta_1,\ldots,\eta_r$ are the functional observability indices
associated with the $r$ functional rows, this dimension is
$$
q_0=\sum_{j=1}^r\eta_j\ge r.
$$
The same construction gives the complete family of order-$q_0$ augmented
functional matrices satisfying the algebraic condition, with observer
existence at this order determined by a spectral rank condition.

At the opposite end, let
$$
n_0:=\rank\mathcal O_C,
$$
where $\mathcal O_C$ is the observability matrix of $(A,C)$. Under functional
observability, we show that the lower and upper endpoints of the admissible
functional-observer order range are $q_0$ and $n_0-p$, respectively. Hence
$$
r\le q_0\le\nu_{\min}\le n_0-p\le n-p.
$$
The classical Darouach and reduced-order Luenberger observer orders arise as
the limiting cases of these endpoints: $q_0=r$ gives the Darouach order,
whereas, when $(A,C)$ is observable, $n_0=n$ and the upper endpoint becomes
$n-p$, the order of the classical reduced-order Luenberger observer.

The question addressed in this paper is what lies between the algebraic
lower endpoint $q_0$ and the functional-observability upper endpoint
$n_0-p$. In particular, if every algebraically admissible realization at order $q_0$
fails the spectral condition, at which subsequent orders can a functional
observer exist, and can existence at each such order be characterized
without prescribing a particular augmentation mechanism?

The first main result concerns the algebraically admissible orders between
these two endpoints. Starting from the minimum algebraic order $q_0$, the
admissible orders are characterized up to the functional-observability
endpoint $n_0-p$. The second main result gives an exact matrix characterization of the
algebraic and spectral observer conditions at an arbitrary prescribed
order $q$. For each order $q$, a full-column-rank matrix $Z$ is used to
represent the nullspace associated with a reduced order-$q$ realization.
The analysis is carried out entirely in the original $n$-dimensional
state space. The algebraic condition becomes a rank test involving $Z$
and a basis of $\ker(CAZ)$, while the spectral condition becomes an
equivalent rank condition on a matrix pencil.

The principal contributions are therefore as follows. First, the complete
algebraically admissible family is characterized at every prescribed
admissible order, not only at the minimum algebraic order $q_0$. Second, the
algebraic and spectral observer conditions are converted into dual nullspace
tests expressed entirely in the original state coordinates. Third, a finite
matrix-chart parameterization covers all admissible nullspaces at a
prescribed order, so failure of the tests over all charts proves
nonexistence at that order. Fourth, these fixed-order results are assembled into a complete
characterization of the feasible-order spectrum $\Sigma_{FO}(L)$ and
hence of the minimum functional-observer order.
Finally, the minimum-order characterization of our previous work~\cite{fernando2026} is recovered explicitly as the specialization $q=q_0$.

The resulting theory characterizes the functional-observer order spectrum
between $q_0$ and $n_0-p$, recovering the classical Darouach and
Luenberger orders as its two limiting cases. The minimum observer order
$\nu_{\min}$ is then obtained as the smallest order in this spectrum
satisfying both the algebraic and spectral observer conditions.

\section{Notation and Problem Formulation}
\label{sec:problem}

\subsection{Notation}

For a matrix $M$, $\row(M)$, $\rank(M)$, $\ker(M)$, $\im(M)$ and
$\operatorname{eig}(M)$ denote its row space, rank, nullspace, column
space, and spectrum (set of eigenvalues), respectively. The symbol $I$
denotes the identity matrix when its dimension is clear from context;
otherwise, $I_k$ denotes the $k\times k$ identity matrix. 

Let
$$
\mathcal O_C:=
\begin{pmatrix}
	C\\ CA\\ \vdots\\ CA^{n-1}
\end{pmatrix},
\qquad
\mathcal O_L:=
\begin{pmatrix}
	L\\ LA\\ \vdots\\ LA^{n-1}
\end{pmatrix}
$$
denote the observability matrices of $(A,C)$ and $(A,L)$, respectively.
We also use a full-column-rank matrix $V$ satisfying
\begin{equation}
	\im V=\ker\begin{pmatrix}C\\CA\end{pmatrix}.
	\label{eq:N1-basis}
\end{equation}
All nullspaces are taken over $\mathbb R$ unless complex values of
$\lambda$ are involved, in which case the corresponding complexified
spaces are understood.

\subsection{Problem Formulation}

Consider the linear time-invariant system
$$
\dot x(t)=Ax(t)+Bu(t),\qquad y(t)=Cx(t),
$$
where
$$
A\in\mathbb R^{n\times n},\quad
B\in\mathbb R^{n\times m},\quad
C\in\mathbb R^{p\times n}.
$$
The functional to be estimated is
$$
z(t)=Lx(t),\qquad L\in\mathbb R^{r\times n}.
$$
Throughout, redundant measured outputs and redundant functional rows are
removed so that
$$
\rank(C)=p,
\qquad
\rank\begin{pmatrix}C\\L\end{pmatrix}=p+r.
$$
Thus, components of $L$ already contained in $\row(C)$ are excluded from
the functional estimation problem.

Recall that
$$
n_0=\rank(\mathcal O_C)
$$
is the dimension of the observable part of $(A,C)$. Functional
observability is characterized by~\cite{ref3n}
$$
\rank
\begin{pmatrix}
	\mathcal O_C\\
	\mathcal O_L
\end{pmatrix}
=
\rank(\mathcal O_C)
=n_0.
$$
The original $n$-dimensional state-space representation is retained
throughout; no separate $n_0$-dimensional state realization is introduced.

Let $\mathcal L\in\mathbb R^{q\times n}$ be an augmented functional matrix
of order $q$. The prescribed functional $Lx(t)$ is required to be recoverable
from $\mathcal Lx(t)$ and the measured output $Cx(t)$. Equivalently,
$$
\row(L)
\subseteq
\row
\begin{pmatrix}
	C\\
	\mathcal L
\end{pmatrix}.
$$
In rank form, this condition is
$$
\rank
\begin{pmatrix}
	C\\
	\mathcal L\\
	L
\end{pmatrix}
=
\rank
\begin{pmatrix}
	C\\
	\mathcal L
\end{pmatrix}.
$$
A reduced order-$q$ realization further satisfies
$$
\rank
\begin{pmatrix}
	C\\
	\mathcal L
\end{pmatrix}
=
p+q.
$$
Thus, the observer order is determined directly by the rank increase
produced by adjoining $\mathcal L$ to $C$.
For estimating the augmented functional $\mathcal Lx(t)$, the algebraic
observer condition is
\begin{equation}
	\rank\begin{pmatrix}
		\mathcal LA\\
		\mathcal L\\
		CA\\
		C
	\end{pmatrix}
	=
	\rank\begin{pmatrix}
		\mathcal L\\
		CA\\
		C
	\end{pmatrix},
	\tag{A}\label{eq:A}
\end{equation}
and the spectral observer condition is
\begin{equation}
	\rank\begin{pmatrix}
		\lambda\mathcal L-\mathcal LA\\
		CA\\
		C
	\end{pmatrix}
	=
	\rank\begin{pmatrix}
		\mathcal L\\
		CA\\
		C
	\end{pmatrix},
	\quad \lambda\in\Omega_S,
	\tag{S}\label{eq:S}
\end{equation}
where $\Omega_S=\mathbb C$ for pole assignability. For the detectability
formulation of Darouach~\cite{darouach2000},
$$
\Omega_S=
\{\lambda\in\mathbb C:\Re(\lambda)\ge0\}.
$$
All real subspaces are understood through their complexifications when
complex values of $\lambda$ are considered.

The objective of this paper is to characterize the functional-observer
order spectrum within
$$
q_0\le q\le n_0-p,
$$
and, in particular, to determine the smallest order for which there exists
an augmented functional matrix $\mathcal L$ satisfying the recoverability
and reducedness requirements together with conditions~(A) and~(S).

\section{Algebraically Admissible Observer Orders}

\subsection{Order Endpoints}
\label{sec:FO-endpoint}

We recall only the part of the functional observability indices construction
required for the subsequent development. Write
$$
C=(C_1^T,\ldots,C_p^T)^T,
\qquad
L=(L_1^T,\ldots,L_r^T)^T,
$$
where $C_1,\ldots,C_p$ and $L_1,\ldots,L_r$ denote the rows of $C$ and
$L$, respectively. We refer to these rows as generators and order them as
\begin{equation}
	g_1,\ldots,g_{p+r}
	=
	C_1,\ldots,C_p,L_1,\ldots,L_r.
	\label{eq:generator-order}
\end{equation}
The candidates
$$
g_iA^k,\qquad k=0,1,2,\ldots,
$$
are processed by increasing powers of $A$ and, at each power, according to
the fixed generator order \eqref{eq:generator-order}. A candidate is appended
if and only if its inclusion increases the rank of the matrix constructed up
to that point. Since
$$
\rank\begin{pmatrix}C\\L\end{pmatrix}=p+r,
$$
all candidates corresponding to $k=0$ are appended. Hence the first $p+r$
rows of the functional observability matrix are
$$
\begin{pmatrix}C\\L\end{pmatrix}.
$$
For $k\ge1$, the candidates are therefore processed in the order
$$
C_1A^k,\ldots,C_pA^k,\;
L_1A^k,\ldots,L_rA^k.
$$
If $k_i$ denotes the largest power for which $g_iA^{k_i}$ is appended, define
$$
\alpha_i=1+k_i.
$$
The functional observability indices associated with the prescribed functional are
$$
\eta_j=\alpha_{p+j},
\qquad j=1,\ldots,r.
$$
The result of~\cite{fernando2026} gives the minimum order for which
condition~(A) can be satisfied:
$$
q_0:=\sum_{j=1}^r\eta_j.
$$
Moreover, the complete family of order-$q_0$ augmented functional matrices
$\mathcal L$ satisfying condition~(A) is parameterized
in~\cite{fernando2026}. We denote this family by
$$
\mathfrak L_{q_0}^A(L)
:=
\left\{
\mathcal L\in\mathbb R^{q_0\times n}:
\begin{array}{l}
	\displaystyle
	\row\begin{pmatrix}C\\L\end{pmatrix}
	\subseteq
	\row\begin{pmatrix}C\\\mathcal L\end{pmatrix},\\[2mm]
	\displaystyle
	\rank\begin{pmatrix}C\\\mathcal L\end{pmatrix}=p+q_0,\\[2mm]
	\mathcal L\text{ satisfies condition~(A)}
\end{array}
\right\}.
$$
Thus $\mathfrak L_{q_0}^A(L)$ is the complete algebraically admissible
family at the minimum algebraic order, and no order $q<q_0$ can satisfy
condition~(A).

The lower and upper endpoints established above therefore restrict the
orders considered in the sequel to
$$
q_0\le q\le n_0-p.
$$
The purpose of this section is to characterize the algebraically admissible
families throughout this entire interval.

\subsection{No-Gap Property}
\label{sec:no-A-gaps}

The rank formulation of condition~(A) permits an order-$q$ realization to be
extended by adjoining one independent row. The following result is the matrix
counterpart of the one-dimensional extension property.

\medskip

\begin{thm}[One-row admissible extension]
\label{thm:one-step}
Assume $(A,C,L)$ is functionally observable. Let
$\mathcal L\in\mathbb R^{q\times n}$ satisfy
\[
 \rank\begin{pmatrix}C\\\mathcal L\\L\end{pmatrix}
 =\rank\begin{pmatrix}C\\\mathcal L\end{pmatrix}=p+q,
\]
condition~\eqref{eq:A}, and
\begin{equation}
 \row\begin{pmatrix}C\\\mathcal L\end{pmatrix}\subseteq\row(\mathcal O_C).
\nonumber 
\end{equation}
If $q<n_0-p$, then there exists a row vector
$v\in\row(\mathcal O_C)$ such that
\begin{equation}
 \rank\begin{pmatrix}C\\\mathcal L\\v\end{pmatrix}=p+q+1,
\nonumber 
\end{equation}
and the augmented matrix
\[
 \mathcal L^+:=\begin{pmatrix}\mathcal L\\v\end{pmatrix}
\]
satisfies condition~\eqref{eq:A} at order $q+1$.
\end{thm}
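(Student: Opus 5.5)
The plan is to reduce the statement to a dimension count on two quotient spaces of $\mathcal R:=\row(\mathcal O_C)$, a subspace of dimension $n_0$. Set
$$
\mathcal W:=\row\begin{pmatrix}C\\\mathcal L\end{pmatrix},\qquad
\mathcal T:=\row\begin{pmatrix}\mathcal L\\CA\\C\end{pmatrix}=\mathcal W+\row(CA).
$$
By hypothesis $\mathcal W\subseteq\mathcal R$ and $\dim\mathcal W=p+q<n_0$. Since $\mathcal R$ is $A$-invariant and contains $\row(C)$, we also have $\mathcal T\subseteq\mathcal R$.

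First I will rewrite condition~\eqref{eq:A} as a subspace inclusion. It says $\row(\mathcal LA)\subseteq\mathcal T$. Because $\row(CA)\subseteq\mathcal T$ trivially, this is equivalent to $\mathcal WA\subseteq\mathcal T$.

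For the augmented matrix $\mathcal L^+$, condition~\eqref{eq:A} reads $\row(\mathcal L^+A)\subseteq\mathcal T+\operatorname{span}(v)$. The rows of $\mathcal LA$ already lie in $\mathcal T$. It therefore suffices to find $v\in\mathcal R\setminus\mathcal W$ with $vA\in\mathcal T$. The condition $v\notin\mathcal W$ gives the rank $p+q+1$. Recoverability of $L$ is inherited, since the row space only grows.

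Second, I will show $\mathcal T\neq\mathcal W$. Suppose instead $\row(CA)\subseteq\mathcal W$. Then $\mathcal WA\subseteq\mathcal T=\mathcal W$, so $\mathcal W$ is an $A$-invariant subspace containing $\row(C)$. Hence $\mathcal W$ contains every $\row(CA^k)$, and so $\mathcal W\supseteq\mathcal R$. This gives $\dim\mathcal W\ge n_0$, contradicting $q<n_0-p$. Thus $\dim(\mathcal R/\mathcal T)<\dim(\mathcal R/\mathcal W)$.

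Third, and this is the key step, I will define the linear map
$$
\varphi:\mathcal R/\mathcal W\to\mathcal R/\mathcal T,\qquad x+\mathcal W\mapsto xA+\mathcal T .
$$
The map is well defined because $\mathcal RA\subseteq\mathcal R$ and $\mathcal WA\subseteq\mathcal T$ by the first step. Its domain has strictly larger dimension than its codomain, so $\varphi$ has a nonzero kernel. Pick any $v\in\mathcal R$ with $v+\mathcal W\ne0$ and $\varphi(v+\mathcal W)=0$. Then $v\in\mathcal R\setminus\mathcal W$ and $vA\in\mathcal T$, which is exactly the vector required in the first step.

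The main obstacle is finding the right object on which to carry out the counting. A direct choice such as $v=C_iA$ need not close under $A$. The quotient-map formulation avoids any explicit chain construction. Note that functional observability enters only to justify the endpoint $n_0-p$; the construction itself uses just $\mathcal W\subseteq\mathcal R$ and $\dim\mathcal W<n_0$.
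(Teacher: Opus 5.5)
Your proof is correct and follows the paper's argument. The paper uses the same contradiction to show $\row(CA)\not\subseteq\row\begin{pmatrix}C\\\mathcal L\end{pmatrix}$. It then takes the kernel $\mathcal V\supseteq\mathcal W$ of $v\mapsto vA+\mathcal T$ on $\mathcal R$ and counts $\dim\mathcal V\ge\dim\mathcal T>\dim\mathcal W$, which is your map $\varphi$ before quotienting by $\mathcal W$; your version simply absorbs the paper's separate step $\mathcal W\subseteq\mathcal V$ into the well-definedness of $\varphi$.
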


\medskip

\begin{proof}
	Put
	\[
	X:=\begin{pmatrix}C\\\mathcal L\end{pmatrix},\qquad
	H:=\begin{pmatrix}C\\CA\end{pmatrix}.
	\]
	Then $\rank X=p+q<n_0$ and
	$\row(X)\subseteq\row(\mathcal O_C)$. Condition~(A) is equivalent to
\[
\row(XA)\subseteq
\row\begin{pmatrix}X\\H\end{pmatrix}.
\]
If $\row(H)\subseteq\row(X)$, then
$$
\row(XA)\subseteq\row(X),
$$
so $\row(X)$ is right $A$-invariant. Since
$$
\row(C)\subseteq\row(X),
$$
it follows successively that
$$
\row(CA^k)\subseteq\row(X),
\qquad k=0,1,\ldots,n-1.
$$
Hence
$$
\row(\mathcal O_C)\subseteq\row(X),
$$
and therefore
$$
n_0=\rank(\mathcal O_C)\le\rank X.
$$
This contradicts $\rank X=p+q<n_0$.
Therefore
$$
\row(H)\not\subseteq\row(X),
$$
and hence
$$
\rank\begin{pmatrix}X\\H\end{pmatrix}>\rank X.
$$
By the Cayley--Hamilton theorem,
$\row(\mathcal O_C)A\subseteq\row(\mathcal O_C)$. Define
	\[
	\mathcal V:=\left\{v\in\row(\mathcal O_C):
	vA\in\row\begin{pmatrix}X\\H\end{pmatrix}\right\}.
	\]
	This is a linear subspace of $\row(\mathcal O_C)$. To estimate its
	dimension, consider
	\[
	T:\row(\mathcal O_C)\longrightarrow\row(\mathcal O_C),
	\quad T(v)=vA,
	\]
	and let
	\[
	\pi:\row(\mathcal O_C)\longrightarrow
	\row(\mathcal O_C)/\row\begin{pmatrix}X\\H\end{pmatrix}
	\]
	be the quotient map. By definition,
	$$
	\ker(\pi\circ T)=\mathcal V.
	$$
	Since $\dim\row(\mathcal O_C)=n_0$, rank--nullity gives
	$$
	\dim\mathcal V
	=
	n_0-\rank(\pi\circ T).
	$$
	Moreover,
	$$
	\rank(\pi\circ T)
	\le
	\dim\frac{\row(\mathcal O_C)}
	{\row\begin{pmatrix}X\\H\end{pmatrix}}
	=
	n_0-\rank\begin{pmatrix}X\\H\end{pmatrix}.
	$$
	Consequently,
	$$
	\dim\mathcal V
	\ge
	\rank\begin{pmatrix}X\\H\end{pmatrix}
	>
	\rank X,
	$$
	where the strict inequality follows from
	$\row(H)\nsubseteq\row(X)$.
	
	Since $\row(X)\subseteq\row(\mathcal O_C)$ and
	$$
	\row(XA)\subseteq
	\row\begin{pmatrix}X\\H\end{pmatrix},
	$$
	the definition of $\mathcal V$ gives
	$$
	\row(X)\subseteq\mathcal V.
	$$
	Since
	$$
	\dim\mathcal V>\rank X=\dim\row(X),
	$$
	there exists
	$$
	v\in\mathcal V\setminus\row(X).
	$$
	Thus $v$ is independent of the rows of $X$, and hence
	\[
	\rank\begin{pmatrix}C\\\mathcal L\\v\end{pmatrix}
	=p+q+1.
	\]
	Moreover, by the definition of $\mathcal V$,
	$$
	vA\in\row\begin{pmatrix}X\\H\end{pmatrix}.
	$$
	Together with
	$$
	\row(XA)\subseteq\row\begin{pmatrix}X\\H\end{pmatrix},
	$$
	this yields
	$$
	\row\begin{pmatrix}XA\\vA\end{pmatrix}
	\subseteq
	\row\begin{pmatrix}X\\v\\H\end{pmatrix}.
	$$
	Since
	$$
	\begin{pmatrix}X\\v\end{pmatrix}
	=
	\begin{pmatrix}C\\\mathcal L^+\end{pmatrix},
	$$
	this is precisely condition~(A) for $\mathcal L^+$.
\end{proof}

\medskip

The theorem is an extension result: starting from an admissible order-$q$
matrix, it constructs at least one admissible order-$(q+1)$ matrix by
adjoining one row, but it does not parameterize all admissible matrices at
the higher order. Combining it with the minimum condition-(A) order from
\cite{fernando2026} gives the following no-gap result.

\medskip

\begin{thm}[No gaps in condition-(A) feasibility]
\label{thm:A-spectrum}
Assume that the triple $(A,C,L)$ is functional observable, and let
$q_0:=\sum_{j=1}^r\eta_j$. Then condition~(A) is feasible at every order
\[
 q\in\{q_0,q_0+1,\ldots,n_0-p\},
\]
whereas no order $q<q_0$ satisfies condition~(A).
\end{thm}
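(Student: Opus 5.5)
The plan is an induction on $q$. The base case comes from the minimum-order result of~\cite{fernando2026}, and Theorem~\ref{thm:one-step} supplies the inductive step. The claim that no order $q<q_0$ satisfies condition~(A) is exactly the minimality statement recalled in Section~\ref{sec:FO-endpoint}, so it needs no new argument. The real work is to show feasibility at every $q$ from $q_0$ to $n_0-p$.

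\emph{Base case.} Pick $\mathcal L_0\in\mathfrak L_{q_0}^A(L)$, which is nonempty by~\cite{fernando2026}. It satisfies recoverability, reducedness with $\rank\begin{pmatrix}C\\\mathcal L_0\end{pmatrix}=p+q_0$, and condition~(A). Theorem~\ref{thm:one-step} also requires the extra hypothesis
$$
\row\begin{pmatrix}C\\\mathcal L_0\end{pmatrix}\subseteq\row(\mathcal O_C),
$$
and this is the main obstacle. I would try to get it from the construction in~\cite{fernando2026}. There the rows of $\mathcal L_0$ can be chosen, modulo $\row\begin{pmatrix}C\\CA\end{pmatrix}$, as the appended candidates $L_jA^k$, $k<\eta_j$, of the functional observability matrix. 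Functional observability gives $\row(\mathcal O_L)\subseteq\row(\mathcal O_C)$, so each $L_jA^k$ lies in $\row(\mathcal O_C)$. Rows of $C$ and $CA$ lie there trivially. Hence this particular member of the family satisfies the inclusion.

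If one prefers not to rely on the details of that construction, there is a fallback. Suppose a general admissible $\mathcal L_0$ has rows outside $\row(\mathcal O_C)$. Then I would replace $\row\begin{pmatrix}C\\\mathcal L_0\end{pmatrix}$ by its intersection with $\row(\mathcal O_C)$ and check two things. First, the intersection still contains $\row(C)$ and $\row(L)$, again using $\row(L)\subseteq\row(\mathcal O_C)$. Second, the intersection still satisfies (A), because $\row(\mathcal O_C)$ is right $A$-invariant and contains $\row(CA)$. Minimality of $q_0$ then forces the dimension to stay at $p+q_0$. Only one base matrix is needed, so the first route suffices.

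\emph{Inductive step.} Suppose $\mathcal L_q$ at order $q$ with $q_0\le q<n_0-p$ satisfies all hypotheses of Theorem~\ref{thm:one-step}: recoverability, $\rank=p+q$, condition~(A), and the inclusion in $\row(\mathcal O_C)$. The theorem then gives $v\in\row(\mathcal O_C)$ such that $\mathcal L_{q+1}=\begin{pmatrix}\mathcal L_q\\v\end{pmatrix}$ has rank $p+q+1$ together with $C$ and satisfies (A). The hypotheses propagate to order $q+1$:
\begin{itemize}
\item recoverability persists because $\row\begin{pmatrix}C\\\mathcal L_q\end{pmatrix}\subseteq\row\begin{pmatrix}C\\\mathcal L_{q+1}\end{pmatrix}$;
\item the inclusion in $\row(\mathcal O_C)$ persists because $v\in\row(\mathcal O_C)$.
\end{itemize}
So the theorem applies again at order $q+1$. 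Iterating until $q=n_0-p$ gives feasibility at every order in $\{q_0,\dots,n_0-p\}$. This also shows the interval is nonempty, with $q_0\le n_0-p$: the base matrix has rank $p+q_0$ inside $\row(\mathcal O_C)$, which has dimension $n_0$.
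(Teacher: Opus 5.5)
Your proposal is correct and follows the paper's proof. You take an order-$q_0$ realization from \cite{fernando2026} whose rows lie in $\row(\mathcal O_C)$ by functional observability, then apply Theorem~\ref{thm:one-step} repeatedly while checking that recoverability and the inclusion in $\row(\mathcal O_C)$ are preserved. Your fallback intersection argument improves on the paper's appeal to the construction in \cite{fernando2026}: because $\row(\mathcal O_C)$ is right $A$-invariant and contains $\row(C)$, $\row(CA)$ and $\row(L)$, minimality of $q_0$ shows that every member of $\mathfrak L_{q_0}^A(L)$ already satisfies $\row\begin{pmatrix}C\\\mathcal L_0\end{pmatrix}\subseteq\row(\mathcal O_C)$, without relying on the detailed form of that construction.
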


\medskip

\begin{proof}
	By \cite{fernando2026}, $q_0$ is the minimum order for which
	condition~(A) can be satisfied. Hence no order $q<q_0$ is algebraically
	admissible, while an order-$q_0$ realization exists.
	
	Under functional observability, the functional-observability-index
	construction uses rows of $C$, $L$, and their $A$-iterates. Since
	functional observability gives
	$$
	\row(\mathcal O_L)\subseteq\row(\mathcal O_C),
	$$
	all rows used in this construction belong to $\row(\mathcal O_C)$.
	Consequently, the order-$q_0$ realization may be chosen so that
	$$
	\row
	\begin{pmatrix}
		C\\
		\mathcal L
	\end{pmatrix}
	\subseteq
	\row(\mathcal O_C).
	$$
	Suppose now that a condition-(A) realization of order $q$ has been
	obtained with
	$$
	q_0\le q<n_0-p
	$$
	and
	$$
	\row
	\begin{pmatrix}
		C\\
		\mathcal L
	\end{pmatrix}
	\subseteq
	\row(\mathcal O_C).
	$$
	By Theorem~\ref{thm:one-step}, there exists
	$v\in\row(\mathcal O_C)$ such that
	$$
	\mathcal L^+
	=
	\begin{pmatrix}
		\mathcal L\\
		v
	\end{pmatrix}
	$$
	is a condition-(A) realization of order $q+1$. Moreover, because both
	$\row(C,\mathcal L)$ and $v$ belong to $\row(\mathcal O_C)$,
	$$
	\row
	\begin{pmatrix}
		C\\
		\mathcal L^+
	\end{pmatrix}
	\subseteq
	\row(\mathcal O_C).
	$$
	Thus the hypothesis required to apply Theorem~\ref{thm:one-step} is
	preserved after each extension.
	
	Starting from the order-$q_0$ realization and applying
	Theorem~\ref{thm:one-step} successively therefore yields condition-(A)
	realizations at every order
	$$
	q=q_0,q_0+1,\ldots,n_0-p.
	$$
	Together with the nonexistence of condition-(A) realizations for
	$q<q_0$, this proves the result.
\end{proof}

Accordingly, the set of orders at which condition~(A) is feasible is
\begin{equation}
	\{q_0,q_0+1,\ldots,n_0-p\}.
	\label{eq:SigmaA-result}
\end{equation}
When $(A,C)$ is observable, $n_0=n$, and \eqref{eq:SigmaA-result} becomes
$$\{q_0,q_0+1,\ldots,n-p\}.$$
The preceding theorem establishes feasibility, not a complete
parameterization. At $q=q_0$, the complete condition-(A) family
$\mathfrak L_{q_0}^{A}(L)$ is available from \cite{fernando2026}. Hence an order-$q_0$ functional observer exists if and only if at least
one member of $\mathfrak L_{q_0}^{A}(L)$ satisfies condition~(S), in which
case $\nu_{\min}=q_0$. Otherwise, the observer order must be increased.
For $q>q_0$, Theorem~\ref{thm:A-spectrum} guarantees condition-(A)
feasibility but does not characterize all admissible matrices at a
prescribed order. The next sections give the complete fixed-order
characterization using nullspace basis matrices.

\medskip

\begin{remark}
	The functional observability indices determine the lower endpoint $q_0$. Once $q_0$ is reached,
	condition~(A) is feasible at every successive order through $n_0-p$.
	The result does not imply corresponding monotonicity for the joint
	conditions~(A) and~(S).
\end{remark}

\section{Nullspace Matrix Characterization of Condition (A)} \label{sec:annihilator-A} We now seek a matrix characterization of all condition-(A) realizations at a prescribed order. Rather than parameterizing the rows of $\mathcal L$ directly, it is convenient to represent $\row\begin{pmatrix}C\\\mathcal L\end{pmatrix}$ through its nullspace. This leads to a column-space representation in which both the realization requirement and condition~(A) admit simple matrix forms.

\medskip

\begin{thm}[Matrix characterization of condition (A)]
	\label{thm:A-dual}
	Let $\mathcal L\in\mathbb R^{q\times n}$ be a reduced
	order-$q$ realization satisfying
	\[
	\rank\begin{pmatrix}C\\\mathcal L\end{pmatrix}=p+q,
	\quad
	\row\begin{pmatrix}
		C\\
		L
	\end{pmatrix}\subseteq\row\begin{pmatrix}
		C\\
		\mathcal L
	\end{pmatrix},
	\]
	and let $d=n-p-q$.
Let $Z\in\mathbb R^{n\times d}$ have full
	column rank and satisfy
	\[
	\im Z
	=
	\ker\begin{pmatrix}C\\\mathcal L\end{pmatrix}.
	\]
	Let $R_Z$ be any full-column-rank matrix satisfying
	\[
	\im R_Z=\ker(CAZ).
	\]
	Then condition~(A) is equivalent to
	\begin{equation}
		\im(AZR_Z)\subseteq\im Z.
	\nonumber 
	\end{equation}
	Equivalently,
	\begin{equation}
		\rank\begin{pmatrix}Z&AZR_Z\end{pmatrix}=d.
\nonumber 
	\end{equation}
\end{thm}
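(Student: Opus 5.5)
The plan is to dualize condition~(A) by passing from row spaces to their annihilators in $\mathbb R^n$. Put $X:=\begin{pmatrix}C\\\mathcal L\end{pmatrix}$ and $H:=\begin{pmatrix}C\\CA\end{pmatrix}$. Since $\row(C)\subseteq\row(X)$, we have
\[
\row\begin{pmatrix}X\\H\end{pmatrix}=\row\begin{pmatrix}X\\CA\end{pmatrix}.
\]
Condition~(A) is therefore the inclusion $\row(XA)\subseteq\row\begin{pmatrix}X\\CA\end{pmatrix}$, as already used in the proof of Theorem~\ref{thm:one-step}. We will use the standard duality that, for matrices $M,N$ with $n$ columns,
\[
\row(M)\subseteq\row(N)\iff\ker N\subseteq\ker M .
\]
Hence (A) is equivalent to
\[
\ker\begin{pmatrix}X\\CA\end{pmatrix}\subseteq\ker(XA).
\]

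Next we describe $\ker\begin{pmatrix}X\\CA\end{pmatrix}$ through $Z$ and $R_Z$. Every $x\in\ker X$ can be written uniquely as $x=Zw$, because $Z$ has full column rank and $\im Z=\ker X$. Then $CAx=0$ if and only if $w\in\ker(CAZ)=\im R_Z$, that is, $w=R_Zs$. This gives
\[
\ker\begin{pmatrix}X\\CA\end{pmatrix}=\im(ZR_Z).
\]
Condition~(A) thus becomes $XAZR_Z=0$, that is, $\im(AZR_Z)\subseteq\ker X=\im Z$, which is the first claimed form.

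For the rank form, note that $\im(AZR_Z)\subseteq\im Z$ holds if and only if adjoining the columns of $AZR_Z$ to $Z$ does not enlarge the column space. This is the same as $\rank\begin{pmatrix}Z&AZR_Z\end{pmatrix}=\rank Z$. It remains to record that $\rank Z=d=n-p-q$, which follows from rank–nullity applied to $X$, since $\rank X=p+q$.

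Each step is elementary. The one point requiring care is the first reduction, where the $C$ block of $H$ is absorbed into $X$; without it the kernel description would not reduce to $\im(ZR_Z)$. Two further remarks are worth making. The recoverability hypothesis on $L$ plays no role in the equivalence. The result is also independent of the choice of the bases $Z$ and $R_Z$, because only their images enter the argument. For complex $\lambda$ nothing is needed, since everything here is real.
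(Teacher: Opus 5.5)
Your proposal is correct and follows essentially the same route as the paper's proof. You dualize condition~(A) to the kernel inclusion $\ker\begin{pmatrix}X\\CA\end{pmatrix}\subseteq\ker(XA)$, parameterize that kernel as $\im(ZR_Z)$ via $x=Zw$ and $w=R_Zs$, and then read off the image inclusion and its rank form from $\rank Z=d$.
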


\begin{proof}
	Condition~(A) is equivalent to
	\[
	\row\begin{pmatrix}C\\\mathcal L\end{pmatrix}A
	\subseteq
	\row\begin{pmatrix}C\\\mathcal L\\CA\end{pmatrix}.
	\]
	Taking nullspaces gives the equivalent implication
	\[
	\begin{pmatrix}C\\\mathcal L\\CA\end{pmatrix}z=0
	\quad\Longrightarrow\quad
	\begin{pmatrix}C\\\mathcal L\end{pmatrix}Az=0.
	\]
	Since
	\[
	\im Z
	=
	\ker\begin{pmatrix}C\\\mathcal L\end{pmatrix}
	\]
	and $Z$ has full column rank, every vector satisfying
	$Cz=\mathcal Lz=0$ can be written uniquely as
	\[
	z=Z\xi.
	\]
	The additional condition $CAz=0$ is therefore equivalent to
	\[
	CAZ\xi=0,
	\]
	that is,
	\[
	\xi\in\ker(CAZ)=\im R_Z.
	\]
	Hence $\xi=R_Z\eta$ for some $\eta$, and consequently every vector
	satisfying the premise of the above implication has the form
	\[
	z=ZR_Z\eta.
	\]
	Substituting this expression into the conclusion of the above
	implication gives
	\[
	\begin{pmatrix}C\\\mathcal L\end{pmatrix}
	AZR_Z\eta=0
	\qquad\text{for every }\eta.
	\]
	Using again
	\[
	\ker\begin{pmatrix}C\\\mathcal L\end{pmatrix}
	=\im Z,
	\]
	this is equivalent to
	\[
	AZR_Z\eta\in\im Z
	\qquad\text{for every }\eta,
	\]
	or, equivalently,
	\[
	\im(AZR_Z)\subseteq\im Z.
	\]
	This proves the image inclusion.
	
	Finally, since $Z\in\mathbb R^{n\times d}$ has full column rank,
	$\rank Z=d$. Therefore,
	\[
	\im(AZR_Z)\subseteq\im Z
	\iff
	\rank\begin{pmatrix}Z&AZR_Z\end{pmatrix}=d,
	\]
	which proves the rank condition.
\end{proof}

\medskip

\begin{corol}[Fixed-order condition-(A) parameterization]
\label{cor:A-Z-rank}
For a prescribed $q\in\{q_0,\ldots,n_0-p\}$, put $d=n-p-q$.
The complete family of reduced matrices \(\mathcal L\in\mathbb R^{q\times n}\) satisfying the realization requirement and condition~(A) is parameterized by all full-column-rank matrices
$Z\in\mathbb R^{n\times d}$ satisfying
\[
 CZ=0,\quad LZ=0,
\]
and
\begin{equation}
 \rank\begin{pmatrix}Z&AZR_Z\end{pmatrix}=d,
 \quad \im R_Z=\ker(CAZ).
 \label{eq:cor-A-Z-rank}
\end{equation}
For each such $Z$, the corresponding matrices $\mathcal L$ are exactly
those satisfying
\[
	\ker\begin{pmatrix}C\\\mathcal L\end{pmatrix}=\im Z.
\]
\end{corol}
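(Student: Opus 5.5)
The plan is to prove the corollary as a bijection between two families. The first is the set of admissible row spaces $\mathcal R=\row\begin{pmatrix}C\\\mathcal L\end{pmatrix}$. The second is the set of subspaces $\im Z$, with $Z$ ranging over full-column-rank matrices satisfying the stated conditions. The bijection is orthogonal annihilation, $\mathcal R\mapsto\mathcal R^{\perp}=\ker\begin{pmatrix}C\\\mathcal L\end{pmatrix}$. Once this correspondence is in place, the rank test is supplied directly by Theorem~\ref{thm:A-dual}. The only remaining work is to translate the realization requirements into conditions on $Z$ and to check surjectivity.

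\emph{Step 1: from $\mathcal L$ to $Z$.} Let $\mathcal L$ be reduced of order $q$, satisfy the realization requirement, and satisfy~(A). Take $Z$ to be any basis matrix of $\ker\begin{pmatrix}C\\\mathcal L\end{pmatrix}$.
\begin{itemize}
\item Since $\rank\begin{pmatrix}C\\\mathcal L\end{pmatrix}=p+q$, this kernel has dimension $d=n-p-q$.
\item The inclusion $\row\begin{pmatrix}C\\L\end{pmatrix}\subseteq\row\begin{pmatrix}C\\\mathcal L\end{pmatrix}$ gives $CZ=0$ and $LZ=0$.
\item Theorem~\ref{thm:A-dual} gives \eqref{eq:cor-A-Z-rank}.
\end{itemize}
We must also note that \eqref{eq:cor-A-Z-rank} depends only on $\im Z$ and not on the choice of basis or of $R_Z$. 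If $Z'=ZG$ with $G$ invertible, then $\ker(CAZ')=G^{-1}\ker(CAZ)$, so $\im(AZ'R_{Z'})=\im(AZR_Z)$. The same argument shows that changing the basis $R_Z$ leaves the image unchanged.

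\emph{Step 2: from $Z$ to $\mathcal L$ (the converse).} Let $Z\in\mathbb R^{n\times d}$ have full column rank and satisfy the conditions. Define $\mathcal R:=(\im Z)^{\perp}$, which has dimension $p+q$.
\begin{itemize}
\item Because $CZ=0$ and $LZ=0$, we have $\row\begin{pmatrix}C\\L\end{pmatrix}\subseteq\mathcal R$.
\item Since $\rank C=p$, we can choose $q$ rows $\mathcal L$ that complete a basis of $\row(C)$ to a basis of $\mathcal R$.
\item Then $\rank\begin{pmatrix}C\\\mathcal L\end{pmatrix}=p+q$ and $\ker\begin{pmatrix}C\\\mathcal L\end{pmatrix}=\mathcal R^{\perp}=\im Z$.
\item The realization requirement holds by construction.
\item Theorem~\ref{thm:A-dual}, applied to this $\mathcal L$ and $Z$, converts \eqref{eq:cor-A-Z-rank} back into~(A).
\end{itemize}
Finally, we show that every $\mathcal L$ with $\ker\begin{pmatrix}C\\\mathcal L\end{pmatrix}=\im Z$ and $q$ rows has the required properties. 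Its row space together with $C$ equals $\mathcal R$, which forces rank $p+q$. Condition~(A) depends only on $\row\begin{pmatrix}C\\\mathcal L\end{pmatrix}$, since $\row\begin{pmatrix}\mathcal L\\CA\\C\end{pmatrix}$ and the $A$-image of the rows are determined by $\mathcal R$. This gives the ``exactly those'' clause.

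\emph{Expected obstacle.} The only delicate points are bookkeeping, and each has a direct resolution.
\begin{itemize}
\item \emph{Independence from chosen bases.} Condition~(A) must be independent of the particular basis $\mathcal L$ of $\mathcal R$ modulo $\row(C)$. This follows by rewriting (A) as $\mathcal R A\subseteq\mathcal R+\row(CA)$.
\item \emph{Role of the range $q\in\{q_0,\ldots,n_0-p\}$.} The range is not used in the equivalence itself. It only guarantees, by Theorem~\ref{thm:A-spectrum}, that the parameterized family is nonempty. For $q$ outside this range the same equivalence holds, but the family may be empty.
\end{itemize}
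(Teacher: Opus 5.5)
Your proposal is correct and follows essentially the paper's route. The forward direction is Theorem~\ref{thm:A-dual} applied to a basis of $\ker\begin{pmatrix}C\\\mathcal L\end{pmatrix}$. The converse is the same complement construction of $\mathcal L$ from $\im Z$ that the paper uses in the converse half of the proof of Theorem~\ref{thm:main-fixed-order}, again followed by Theorem~\ref{thm:A-dual}. Your explicit check that the rank test depends only on $\im Z$, and not on the choice of $Z$ or $R_Z$, supplies a justification the paper only asserts in its basis-invariance remark.
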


\medskip

\begin{remark}
	For a prescribed order $q$, Theorem~\ref{thm:A-dual} gives an
	equivalent nullspace condition for a given realization $\mathcal L$,
	whereas Corollary~\ref{cor:A-Z-rank} parameterizes all realizations
	$\mathcal L$ satisfying condition~(A) at that order.
\end{remark}

\medskip

Thus Corollary~\ref{cor:A-Z-rank} parameterizes the complete condition-(A)
family through nullspace basis matrices. The representation is not unique:
$Z$ and $ZQ$, with $Q$ nonsingular, define the same nullspace. Likewise,
two reduced matrices $\mathcal L$ and $\widehat{\mathcal L}$ represent the
same nullspace if and only if
\begin{equation}
 \widehat{\mathcal L}=Q\mathcal L+EC,
 \qquad Q\ \text{nonsingular}.
 \label{eq:equiv-Lcal}
\end{equation}
The matrix $R_Z$ is not an independent design variable; once $Z$ is fixed,
it is simply a basis matrix for $\ker(CAZ)$. The rank test
\eqref{eq:cor-A-Z-rank} states that multiplication by $A$ of the columns of
$ZR_Z$ produces no column direction outside $\im Z$.

\section{Nullspace Matrix Characterization of Condition (S)}
\label{sec:annihilator-S}

The spectral condition can be expressed using the same nullspace basis
matrix $Z$. Let $V$ be any full-column-rank matrix satisfying
\eqref{eq:N1-basis}.

\medskip

\begin{thm}[Matrix-pencil form of condition (S)]
	\label{thm:S-pencil}
	Assume condition~(A) holds, and let $Z$ be a full-column-rank matrix
	satisfying
	\[
	\im Z=\ker\begin{pmatrix}C\\\mathcal L\end{pmatrix}.
	\]
	Then condition~(S) is equivalent to
	\begin{equation}
		\rank\begin{pmatrix}(\lambda I-A)V&Z\end{pmatrix}
		=
		\rank\begin{pmatrix}V&Z\end{pmatrix},
		\quad \lambda\in\Omega_S.
		\label{eq:S-Z-rank}
	\end{equation}
\end{thm}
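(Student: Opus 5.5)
The plan is to argue entirely through nullspaces, in the same spirit as the proof of Theorem~\ref{thm:A-dual}. Put
\[
M:=\begin{pmatrix}\mathcal L\\CA\\C\end{pmatrix},
\qquad
M_\lambda:=\begin{pmatrix}\lambda\mathcal L-\mathcal LA\\CA\\C\end{pmatrix}.
\]
First I will show that condition~(A) gives $\row(M_\lambda)\subseteq\row(M)$ for every $\lambda$. This holds because condition~(A) places each row of $\mathcal LA$ in $\row(M)$. Hence the ranks in (S) agree if and only if the row spaces agree, that is, if and only if $\ker M_\lambda=\ker M$. Since $\ker M\subseteq\ker M_\lambda$ is automatic, (S) reduces to the reverse inclusion $\ker M_\lambda\subseteq\ker M$ for each $\lambda\in\Omega_S$ (complexified where needed).

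Next I will parameterize both kernels through $V$. Every $z$ in either kernel satisfies $Cz=CAz=0$, so $z=Vw$ for a unique $w$, because $V$ has full column rank. For such $z$ we also have $C(\lambda I-A)Vw=0$. The condition $\lambda\mathcal Lz-\mathcal LAz=0$ is therefore equivalent to $\begin{pmatrix}C\\\mathcal L\end{pmatrix}(\lambda I-A)Vw=0$, i.e.\ $(\lambda I-A)Vw\in\im Z$. Similarly, $z\in\ker M$ is equivalent to $Vw\in\im Z$. Define
\[
W_\lambda:=\{w:(\lambda I-A)Vw\in\im Z\},\qquad W_0:=\{w:Vw\in\im Z\}.
\]
Then $\ker M_\lambda=VW_\lambda$ and $\ker M=VW_0$. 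The automatic inclusion gives $W_0\subseteq W_\lambda$; this is where condition~(A) enters once more, since it forces $\mathcal LAz=0$ on $\ker M$. Consequently (S) is equivalent to $\dim W_\lambda=\dim W_0$.

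Finally I will convert these dimensions into ranks. Let $k$ be the number of columns of $V$ and $d$ the number of columns of $Z$. Because $Z$ has full column rank, $\ker\begin{pmatrix}N&Z\end{pmatrix}$ is isomorphic to $\{w:Nw\in\im Z\}$ via $(w,\xi)\mapsto w$. Rank--nullity then gives
\[
\rank\begin{pmatrix}(\lambda I-A)V&Z\end{pmatrix}=k+d-\dim W_\lambda,
\qquad
\rank\begin{pmatrix}V&Z\end{pmatrix}=k+d-\dim W_0 .
\]
Hence equality of dimensions is exactly \eqref{eq:S-Z-rank}.

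I expect the main care point to be the first step: justifying that condition~(A) makes the rank equality in (S) equivalent to kernel equality with a one-sided automatic inclusion. The direction of that inclusion must be tracked correctly so that (S) becomes a single inclusion test. The remaining steps are routine uses of full column rank.
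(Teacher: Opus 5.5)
Your proposal is correct and follows essentially the same route as the paper. In both, condition~(A) gives $\row(M_\lambda)\subseteq\row(M)$, so (S) reduces to kernel equality, and both kernels are parameterized through $V$ by the nested sets $\{w:Vw\in\im Z\}\subseteq\{w:(\lambda I-A)Vw\in\im Z\}$. Rank--nullity then turns equality of their dimensions into the pencil rank condition, and your explicit isomorphism $(w,\xi)\mapsto w$ between $\ker\begin{pmatrix}N&Z\end{pmatrix}$ and $\{w:Nw\in\im Z\}$ spells out a step the paper only invokes.
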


\medskip

\begin{proof}
	Let $v\in\im V$. Then $v=V\xi$ for some $\xi$. By
	\eqref{eq:N1-basis},
	$$
	Cv=CAv=0.
	$$
	Since
	$$
	\im Z
	=
	\ker\begin{pmatrix}C\\\mathcal L\end{pmatrix},
	$$
	we have
	\begin{equation}
		w\in\im Z
		\iff
		Cw=0,\quad \mathcal Lw=0,
		\qquad w\in\mathbb C^n.
		\label{eq:Z-membership}
	\end{equation}
	Taking $w=v$ and using $Cv=0$ gives
	$$
	v\in\im Z
	\iff
	\mathcal Lv=0.
	$$
	Taking $w=(\lambda I-A)v$ in \eqref{eq:Z-membership} gives
	$$
	(\lambda I-A)v\in\im Z
	$$
	if and only if
	$$
	C(\lambda I-A)v=0,
	\qquad
	\mathcal L(\lambda I-A)v=0.
	$$
	Since
	$$
	C(\lambda I-A)v
	=
	\lambda Cv-CAv
	=
	0,
	$$
	we obtain
	$$
	(\lambda I-A)v\in\im Z
	\iff
	(\lambda\mathcal L-\mathcal LA)v=0.
	$$
	
	Under condition~(A),
	$$
	\row(\mathcal LA)
	\subseteq
	\row\begin{pmatrix}
		\mathcal L\\
		CA\\
		C
	\end{pmatrix}.
	$$
	Since $\row(\mathcal L)$ is contained in the latter row space, it follows
	that
	$$
	\row\begin{pmatrix}
		\lambda\mathcal L-\mathcal LA\\
		CA\\
		C
	\end{pmatrix}
	\subseteq
	\row\begin{pmatrix}
		\mathcal L\\
		CA\\
		C
	\end{pmatrix}.
	$$
	Hence condition~(S) is equivalent to equality of these row spaces,
	and therefore to equality of their nullspaces. Thus
	$$
	\begin{pmatrix}
		\lambda\mathcal L-\mathcal LA\\CA\\C
	\end{pmatrix}v=0
	\iff
	\begin{pmatrix}
		\mathcal L\\CA\\C
	\end{pmatrix}v=0.
	$$
	Both nullspaces are contained in
	$$
	\ker\begin{pmatrix}C\\CA\end{pmatrix}
	=
	\im V.
	$$
	Hence it is sufficient to compare them on $\im V$.
	
	For $v\in\im V$, we have $Cv=CAv=0$. Hence the right-hand nullspace
	condition reduces to $\mathcal Lv=0$, which is equivalent to
	$v\in\im Z$, while the left-hand nullspace condition reduces to
	$(\lambda\mathcal L-\mathcal LA)v=0$, which is equivalent to
	$(\lambda I-A)v\in\im Z$. Therefore condition~(S) is equivalent to
	$$
	(\lambda I-A)v\in\im Z
	\iff
	v\in\im Z,
	\qquad v\in\im V.
	$$
	Writing $v=V\xi$, this becomes
	$$
	(\lambda I-A)V\xi\in\im Z
	\iff
	V\xi\in\im Z.
	$$
	
	Define
	\begin{IEEEeqnarray}{rcl}
		\mathscr X_\lambda
		&\ =\ &
		\{\xi:(\lambda I-A)V\xi\in\im Z\}
		\nonumber\\
		\mathscr X_0
		&\ =\ &
		\{\xi:V\xi\in\im Z\}.
		\nonumber
	\end{IEEEeqnarray}
	We first show that condition~(A) implies
	$$
	\mathscr X_0\subseteq\mathscr X_\lambda.
	$$
	Let $\xi\in\mathscr X_0$ and put $v=V\xi$. Then $v\in\im Z$, and hence
	$$
	\mathcal Lv=0.
	$$
	Since $v\in\im V$,
	$$
	Cv=CAv=0.
	$$
	Condition~(A) gives
	$$
	\row(\mathcal LA)
	\subseteq
	\row\begin{pmatrix}
		\mathcal L\\
		CA\\
		C
	\end{pmatrix},
	$$
	and therefore
	$$
	\mathcal LAv=0.
	$$
	Consequently,
	$$
	\mathcal L(\lambda I-A)v
	=
	\lambda\mathcal Lv-\mathcal LAv
	=
	0,
	$$
	and
	$$
	C(\lambda I-A)v
	=
	\lambda Cv-CAv
	=
	0.
	$$
	Hence
	$$
	(\lambda I-A)v
	\in
	\ker\begin{pmatrix}
		C\\
		\mathcal L
	\end{pmatrix}
	=
	\im Z,
	$$
	so $\xi\in\mathscr X_\lambda$. Therefore
	$$
	\mathscr X_0\subseteq\mathscr X_\lambda.
	$$
	
	Now
	$$
	(\lambda I-A)V\xi\in\im Z
	\iff
	V\xi\in\im Z
	$$
	for every $\xi$ if and only if
	$$
	\mathscr X_\lambda=\mathscr X_0.
	$$
	Since $\mathscr X_0\subseteq\mathscr X_\lambda$, this equality holds if
	and only if
	$$
	\dim\mathscr X_0=\dim\mathscr X_\lambda.
	$$
	By the rank--nullity theorem, this is equivalent to
	$$
	\rank\begin{pmatrix}
		(\lambda I-A)V&Z
	\end{pmatrix}
	=
	\rank\begin{pmatrix}
		V&Z
	\end{pmatrix},
	\qquad
	\lambda\in\Omega_S,
	$$
	which proves \eqref{eq:S-Z-rank}.
\end{proof}

\medskip

\begin{remark}[Basis invariance]
Replacing $Z$ by $ZQ$, where $Q$ is nonsingular, leaves
\eqref{eq:cor-A-Z-rank} and \eqref{eq:S-Z-rank} unchanged. Thus the tests
depend only on the column space of $Z$, although the formulation itself
uses matrices throughout.
\end{remark}

\section{Complete Fixed- and All-Order Characterization}
\subsection{Fixed-Order Characterization} \label{sec:fixed-order} At $q=q_0$, the complete condition-(A) family $\mathfrak L_{q_0}^{A}(L)$ is available from \cite{fernando2026}. For every prescribed order in the admissible range, Corollary~\ref{cor:A-Z-rank} parameterizes the complete condition-(A) family through $Z$. Applying the spectral condition of Theorem~\ref{thm:S-pencil} to this family yields the following necessary and sufficient fixed-order existence test. 

\medskip

\begin{thm}[Fixed-order functional-observer existence]
	\label{thm:main-fixed-order}
	Assume $(A,C,L)$ is functionally observable and
	\[
	q_0\le q\le n_0-p.
	\]
	Put
	\[
	d=n-p-q,
	\qquad
	r_1=\rank\begin{pmatrix}C\\CA\end{pmatrix},
	\]
	and let $V\in\mathbb R^{n\times(n-r_1)}$ have full column rank and satisfy
	\[
	\im V
	=
	\ker\begin{pmatrix}C\\CA\end{pmatrix}.
	\]
	Then an order-$q$ functional observer exists if and only if there is a
	full-column-rank matrix $Z\in\mathbb R^{n\times d}$ such that
	\begin{subequations}
		\label{eq:main-Z}
		\begin{align}
			CZ&=0,\quad LZ=0,\label{eq:main-Za}\\
			\rank\begin{pmatrix}Z&AZR_Z\end{pmatrix}&=d,
			\quad \im R_Z=\ker(CAZ),\label{eq:main-Zb}\\
			\rank\begin{pmatrix}(\lambda I-A)V&Z\end{pmatrix}
			&=\rank\begin{pmatrix}V&Z\end{pmatrix},
			\,\, \lambda\in\Omega_S.\label{eq:main-Zc}
		\end{align}
	\end{subequations}
\end{thm}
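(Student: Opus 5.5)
The plan is to read the statement as a composition of three earlier results. An order-$q$ functional observer exists exactly when there is an $\mathcal L\in\mathbb R^{q\times n}$ that satisfies the realization requirement
\[
\row\begin{pmatrix}C\\L\end{pmatrix}\subseteq\row\begin{pmatrix}C\\\mathcal L\end{pmatrix},\qquad \rank\begin{pmatrix}C\\\mathcal L\end{pmatrix}=p+q,
\]
together with conditions~(A) and~(S). I will set up a correspondence between such $\mathcal L$ and full-column-rank $Z$ through $\im Z=\ker\begin{pmatrix}C\\\mathcal L\end{pmatrix}$. Under this correspondence, \eqref{eq:main-Za} encodes the realization requirement, \eqref{eq:main-Zb} encodes~(A) via Theorem~\ref{thm:A-dual}, and \eqref{eq:main-Zc} encodes~(S) via Theorem~\ref{thm:S-pencil}. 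Corollary~\ref{cor:A-Z-rank} already packages the first two of these, and the proof goes through both directions.

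\emph{Necessity.} Let $\mathcal L$ be an order-$q$ observer realization, and pick a full-column-rank $Z$ with $\im Z=\ker\begin{pmatrix}C\\\mathcal L\end{pmatrix}$. The rank condition forces this kernel to have dimension $n-p-q=d$, so $Z\in\mathbb R^{n\times d}$. Since $\row(C,L)\subseteq\row(C,\mathcal L)$, taking kernels gives $\ker\begin{pmatrix}C\\\mathcal L\end{pmatrix}\subseteq\ker\begin{pmatrix}C\\L\end{pmatrix}$, which is $CZ=0$ and $LZ=0$. Theorem~\ref{thm:A-dual} converts~(A) into \eqref{eq:main-Zb}. Because~(A) holds, Theorem~\ref{thm:S-pencil} applies and converts~(S) into \eqref{eq:main-Zc}.

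\emph{Sufficiency.} Let $Z$ satisfy \eqref{eq:main-Z}. I will construct $\mathcal L$ explicitly rather than relying only on the corollary. Let $W^T$ be a basis matrix of the orthogonal complement of $\row(C)$ within $\ker(Z^T)^T$, that is, of the annihilator of $\im Z$. Since $CZ=0$, we have $\row(C)\subseteq\{w:wZ=0\}$, and the latter space has dimension $n-d=p+q$. We may therefore choose $\mathcal L\in\mathbb R^{q\times n}$ so that the rows of $C$ and $\mathcal L$ together form a basis of this annihilator. Then $\rank\begin{pmatrix}C\\\mathcal L\end{pmatrix}=p+q$, and by double annihilation $\ker\begin{pmatrix}C\\\mathcal L\end{pmatrix}=\im Z$. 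From $LZ=0$, each row of $L$ lies in the annihilator of $\im Z$, which is $\row\begin{pmatrix}C\\\mathcal L\end{pmatrix}$, so recoverability holds. Theorem~\ref{thm:A-dual}, applied to this $\mathcal L$ and $Z$, turns \eqref{eq:main-Zb} into~(A). With~(A) established, Theorem~\ref{thm:S-pencil} turns \eqref{eq:main-Zc} into~(S).

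The only points needing care are the following. First, Theorems~\ref{thm:A-dual} and~\ref{thm:S-pencil} are stated for a given $\mathcal L$, so the annihilator construction must be justified: over $\mathbb R$, the identity $\ker(\text{annihilator of }\im Z)=\im Z$ is standard finite-dimensional duality. Second, in the spectral step, (A) must be established before invoking Theorem~\ref{thm:S-pencil}, since that theorem presupposes it. Third, (S) is checked over complexified spaces when $\lambda\notin\mathbb R$, which is harmless because $Z$ and $V$ are real. The hypotheses of functional observability and $q_0\le q\le n_0-p$ only ensure that the range is the meaningful one from Theorem~\ref{thm:A-spectrum}; the equivalence itself does not use them. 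I expect this sufficiency construction to be the main obstacle, though a mild one.
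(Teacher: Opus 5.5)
Your proposal is correct and follows the paper's proof essentially step for step. Necessity takes $Z$ with $\im Z=\ker\begin{pmatrix}C\\\mathcal L\end{pmatrix}$ and applies Theorems~\ref{thm:A-dual} and~\ref{thm:S-pencil}; sufficiency builds a reduced $\mathcal L$ with that nullspace, obtains recoverability from $LZ=0$, and establishes (A) before (S). Your annihilator-basis construction only makes explicit the existence step that the paper asserts without detail, and the stray sentence introducing $W^T$ is unnecessary and can be dropped.
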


\bigskip

\begin{proof}
If an order-$q$ functional observer exists, choose a reduced augmented
functional matrix $\mathcal L\in\mathbb R^{q\times n}$ and a
full-column-rank matrix $Z\in\mathbb R^{n\times d}$ whose columns form
a basis of
$$
\ker\begin{pmatrix}
	C\\
	\mathcal L
\end{pmatrix}.
$$
	Since the prescribed functional $Lx$ is recoverable from $\mathcal Lx$
	and the measured output,
	$$
	\row(L)
	\subseteq
	\row\begin{pmatrix}
		C\\
		\mathcal L
	\end{pmatrix}.
	$$
	Hence
	$$
	CZ=0,
	\qquad
	LZ=0,
	$$
	which gives \eqref{eq:main-Za}. Theorem~\ref{thm:A-dual} gives
	\eqref{eq:main-Zb}, and Theorem~\ref{thm:S-pencil} gives
	\eqref{eq:main-Zc}.
	
	Conversely, suppose that a full-column-rank matrix $Z$ satisfies
	\eqref{eq:main-Z}. Since
	$$
	\rank Z=d=n-p-q
	$$
	and $CZ=0$, we have
	$$
	\im Z\subseteq\ker C,
	\qquad
	\dim(\im Z)=n-p-q.
	$$
	Therefore, there exists a matrix
	$\mathcal L\in\mathbb R^{q\times n}$ such that
	$$
	\ker\begin{pmatrix}
		C\\
		\mathcal L
	\end{pmatrix}
	=
	\im Z
	$$
	and
	$$
	\rank\begin{pmatrix}
		C\\
		\mathcal L
	\end{pmatrix}
	=
	p+q.
	$$
	Thus $\mathcal L$ is a reduced order-$q$ augmented functional matrix.
	
	Moreover, $LZ=0$ implies
	$$
	\im Z\subseteq\ker L.
	$$
	Since
	$$
	\im Z
	=
	\ker\begin{pmatrix}
		C\\
		\mathcal L
	\end{pmatrix},
	$$
	it follows that
	$$
	\ker\begin{pmatrix}
		C\\
		\mathcal L
	\end{pmatrix}
	\subseteq
	\ker L,
	$$
	or equivalently,
	$$
	\row(L)
	\subseteq
	\row\begin{pmatrix}
		C\\
		\mathcal L
	\end{pmatrix}.
	$$
	Hence the prescribed functional $Lx$ is recoverable from $\mathcal Lx$
	and the measured output.
	
	Finally, \eqref{eq:main-Zb}, together with
	Theorem~\ref{thm:A-dual}, gives condition~(A), while
	\eqref{eq:main-Zc}, together with Theorem~\ref{thm:S-pencil}, gives
	condition~(S). Hence an order-$q$ functional observer exists.
\end{proof}

\medskip

\begin{corol}[Recovery of the minimum-order family]
	\label{cor:q0-recovery}
At $q=q_0$, the nullspace characterization of
Theorem~\ref{thm:main-fixed-order} is equivalent to the complete minimum-order
characterization of \cite{fernando2026}. In particular, if the complete
condition-(A) family in \cite{fernando2026} is written as
\begin{equation}
 \mathcal L_J(\Theta,K)
 =
 \begin{pmatrix}
  L\\
  X_J(\Theta)M_C+KC
 \end{pmatrix},
 \label{eq:previous-min-family}
\end{equation}
then, with $d_0=n-p-q_0$, its corresponding nullspace is represented by any
full-column-rank matrix $Z_J(\Theta)\in\mathbb R^{n\times d_0}$ satisfying
\begin{equation}
 \im Z_J(\Theta)
 =
 \ker\begin{pmatrix}
  C\\L\\X_J(\Theta)M_C
 \end{pmatrix}.
 \label{eq:Z-previous-family}
\end{equation}
The parameter $K$ does not affect this nullspace. Conversely, every
order-$q_0$ matrix $Z$ satisfying \eqref{eq:main-Za}--\eqref{eq:main-Zb}
corresponds, up to a change of nullspace basis and the row equivalence
\eqref{eq:equiv-Lcal}, to a member of \eqref{eq:previous-min-family}.
Moreover, \eqref{eq:main-Zc} is equivalent to condition~(S) of
\cite{fernando2026} for the corresponding $\mathcal L_J(\Theta,K)$.
\end{corol}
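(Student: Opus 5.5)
The plan is to treat Corollary~\ref{cor:q0-recovery} as a consequence of two facts: the previous family \eqref{eq:previous-min-family} is complete at order $q_0$, and Corollary~\ref{cor:A-Z-rank} is a bijection, modulo basis change, between nullspaces and the row-equivalence classes \eqref{eq:equiv-Lcal}. We already know that \eqref{eq:previous-min-family} is, by \cite{fernando2026}, the complete set of reduced order-$q_0$ matrices satisfying recoverability and condition~(A). Corollary~\ref{cor:A-Z-rank} gives the same set, now indexed by nullspaces $\im Z$ satisfying \eqref{eq:main-Za}--\eqref{eq:main-Zb}. So the argument consists of matching the two indexings.

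The steps are as follows.

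\textbf{Step 1 (nullspace of a member).} For $\mathcal L=\mathcal L_J(\Theta,K)$, the rows $KC$ lie in $\row(C)$. Hence
\[
\row\begin{pmatrix}C\\\mathcal L_J(\Theta,K)\end{pmatrix}=\row\begin{pmatrix}C\\L\\X_J(\Theta)M_C\end{pmatrix},
\]
and taking kernels gives \eqref{eq:Z-previous-family}. This also shows that $K$ does not affect the nullspace. Since the family is reduced of order $q_0$, the kernel has dimension $n-p-q_0=d_0$. Any basis $Z_J(\Theta)$ of this kernel therefore has full column rank $d_0$ and satisfies $CZ=LZ=0$. Because $\mathcal L_J$ satisfies condition~(A), Theorem~\ref{thm:A-dual} gives \eqref{eq:main-Zb}.

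\textbf{Step 2 (converse).} Let $Z$ be an order-$q_0$ matrix satisfying \eqref{eq:main-Za}--\eqref{eq:main-Zb}. As in the converse part of Theorem~\ref{thm:main-fixed-order}, construct a reduced $\mathcal L$ with
\[
\ker\begin{pmatrix}C\\\mathcal L\end{pmatrix}=\im Z.
\]
This $\mathcal L$ satisfies recoverability, and by Theorem~\ref{thm:A-dual} it satisfies condition~(A). Completeness of \eqref{eq:previous-min-family} from \cite{fernando2026} then gives parameters $J,\Theta,K$ such that $\mathcal L$ and $\mathcal L_J(\Theta,K)$ are admissible realizations. The two can differ only by the row equivalence \eqref{eq:equiv-Lcal}, because equal nullspaces of the stacked matrices $(C;\cdot)$ mean equal row spaces, and both have rank $p+q_0$. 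Consequently $\im Z=\im Z_J(\Theta)$, and $Z=Z_J(\Theta)Q$ for some nonsingular $Q$.

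\textbf{Step 3 (spectral condition).} Condition~(S) of \cite{fernando2026} for $\mathcal L_J(\Theta,K)$ is condition~(S) as written here. Conditions~(A) and~(S) depend only on $\row(C;\mathcal L)$, since adding $EC$ and left-multiplying by $Q$ preserve both rank expressions. Theorem~\ref{thm:S-pencil} then gives equivalence with \eqref{eq:main-Zc}. By the basis-invariance remark, this does not depend on which $Z$ is chosen.

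The main obstacle is in Step 2: relying on the precise completeness statement of \cite{fernando2026}. That statement must be complete up to exactly the equivalence \eqref{eq:equiv-Lcal}, and in particular it must allow left multiplication by a nonsingular $Q$ that mixes the $L$ block with the $X_J(\Theta)M_C$ block. I would handle this by checking that every reduced admissible $\mathcal L$ satisfies $\row(C;L)\subseteq\row(C;\mathcal L)$. This lets us pick a basis of $\row(C;\mathcal L)$ modulo $\row(C)$ whose first $r$ rows are $L$. The remaining rows can then be identified with $X_J(\Theta)M_C$ modulo $\row(C)$, which is the form asserted in \cite{fernando2026}.
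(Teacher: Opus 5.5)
Your proposal is correct and follows essentially the same route as the paper: $KC$ drops out of the kernel, Theorem~\ref{thm:A-dual} handles the forward direction, completeness of the family of \cite{fernando2026} up to \eqref{eq:equiv-Lcal} gives the converse, and Theorem~\ref{thm:S-pencil} handles the spectral part. The only blemish is the order of reasoning in Step~2: completeness should first give $\mathcal L=Q\mathcal L_J(\Theta,K)+EC$, from which equal row spaces and then $\im Z=\im Z_J(\Theta)$ follow, whereas your wording runs in the reverse direction. Your closing paragraph makes explicit the reliance on \cite{fernando2026} that the paper uses without comment.
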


\medskip

\begin{proof}
	A vector $\xi\in\mathbb R^n$ belongs to
	$\ker\begin{pmatrix}C\\\mathcal L_J(\Theta,K)\end{pmatrix}$ if and only if
	\[
	C\xi=0,\qquad L\xi=0,\qquad
	\bigl(X_J(\Theta)M_C+KC\bigr)\xi=0.
	\]
	Since $C\xi=0$, the last equality is equivalent to
	$X_J(\Theta)M_C\xi=0$. Hence
	\[
	\ker\begin{pmatrix}C\\\mathcal L_J(\Theta,K)\end{pmatrix}
	=
	\ker\begin{pmatrix}C\\L\\X_J(\Theta)M_C\end{pmatrix},
	\]
which proves \eqref{eq:Z-previous-family} and shows explicitly why $K$
disappears from the nullspace representation. Since
$\mathcal L_J(\Theta,K)$ is reduced of order $q_0$, this nullspace has
dimension $d_0$.

Because every member of \eqref{eq:previous-min-family} satisfies
condition~(A), Theorem~\ref{thm:A-dual} gives
\begin{IEEEeqnarray}{rcl}
 \rank\begin{pmatrix}Z_J(\Theta)&
	AZ_J(\Theta)R_{Z_J(\Theta)}\end{pmatrix} &\ = \ & d_0 \nonumber \\
\im R_{Z_J(\Theta)} & = & \ker\bigl(CAZ_J(\Theta)\bigr). \nonumber 	
\end{IEEEeqnarray}	
Thus every member of the complete family of \cite{fernando2026} generates a member
of the present order-$q_0$ nullspace family.

Conversely, let $Z\in\mathbb R^{n\times d_0}$ satisfy
\eqref{eq:main-Za}--\eqref{eq:main-Zb}. By
Corollary~\ref{cor:A-Z-rank}, the reduced matrices $\mathcal L$ satisfying
\[
 \ker\begin{pmatrix}C\\\mathcal L\end{pmatrix}=\im Z
\]
are exactly the order-$q_0$ realizations satisfying condition~(A). Since the
parameterization \eqref{eq:previous-min-family} is complete, such an
$\mathcal L$ is represented, up to \eqref{eq:equiv-Lcal}, by
$\mathcal L_J(\Theta,K)$ for an admissible $(J,\Theta,K)$. Therefore
$\im Z=\im Z_J(\Theta)$, or equivalently
$Z=Z_J(\Theta)Q$ for some nonsingular $Q$. Finally,
Theorem~\ref{thm:S-pencil} shows that \eqref{eq:main-Zc} is equivalent to
the spectral condition~(S) for the corresponding $\mathcal L$. Hence the
present characterization recovers both the complete condition-(A) family and
the minimum-order existence test of \cite{fernando2026}.
\end{proof}

\medskip

\begin{remark}
Corollary~\ref{cor:q0-recovery} also exhibits a useful feature of the
nullspace formulation: the freedom $KC$ in the direct $\mathcal L$
parameterization is automatically quotiented out. The matrix $Z$ represents
the underlying subspace rather than a particular row realization of it.
\end{remark}

\subsection{Matrix Parameterization at a Prescribed Order}
\label{sec:grassmann}

Theorem~\ref{thm:main-fixed-order} characterizes fixed-order observer
existence in terms of a nullspace matrix $Z$. We now parameterize all
candidate matrices $Z$ at a prescribed order and then apply this
parameterization over the entire admissible order range.

Let $K_L\in\mathbb R^{n\times m_L}$ have full column rank and satisfy
\begin{equation}
	\im K_L
	=
	\ker\begin{pmatrix}C\\L\end{pmatrix}.
	\label{eq:KL-basis}
\end{equation}
For a prescribed $q\in\{q_0,\ldots,n_0-p\}$, put
\[
d=n-p-q.
\]
Since every candidate $Z\in\mathbb R^{n\times d}$ must satisfy
\[
CZ=0,\qquad LZ=0,
\]
its columns belong to $\im K_L$. Hence every full-column-rank candidate
$Z$ can be written as
\[
Z=K_LW,
\qquad
W\in\mathbb R^{m_L\times d},
\qquad
\rank W=d.
\]
The matrices $W$ and $WQ$, where $Q\in\mathbb R^{d\times d}$ is
nonsingular, have the same column space and therefore generate the same
candidate subspace through $Z=K_LW$. Thus only the column space of $W$
is relevant.

Since $\rank W=d$, there exist $d$ rows of $W$ that form a nonsingular
$d\times d$ submatrix. Let
\[
\mathcal I=\{i_1,\ldots,i_d\}
\subseteq
\{1,\ldots,m_L\}
\]
denote the indices of these $d$ rows. Let $P_{\mathcal I}$ be a
permutation matrix that reorders the rows of $W$ so that the $d$ rows
indexed by $\mathcal I$ occupy the first $d$ positions. Then write
\[
P_{\mathcal I}W
=
\begin{pmatrix}
	W_{\mathcal I}^{(1)}\\
	W_{\mathcal I}^{(2)}
\end{pmatrix},
\]
where $W_{\mathcal I}^{(1)}\in\mathbb R^{d\times d}$ is nonsingular.
Right multiplication by $\left(W_{\mathcal I}^{(1)}\right)^{-1}$ gives
\[
P_{\mathcal I}W\left(W_{\mathcal I}^{(1)}\right)^{-1}
=
\begin{pmatrix}
	I_d\\
	\Theta
\end{pmatrix},
\qquad
\Theta
=
W_{\mathcal I}^{(2)}
\left(W_{\mathcal I}^{(1)}\right)^{-1}.
\]
Hence
\begin{equation}
	W_{\mathcal I}(\Theta)
	:=
	P_{\mathcal I}^T
	\begin{pmatrix}
		I_d\\
		\Theta
	\end{pmatrix}
	=
	W\left(W_{\mathcal I}^{(1)}\right)^{-1},
	\qquad
	\Theta\in\mathbb R^{(m_L-d)\times d}.
	\label{eq:W-chart}
\end{equation}
Since $\left(W_{\mathcal I}^{(1)}\right)^{-1}$ is nonsingular,
\[
\im W_{\mathcal I}(\Theta)=\im W.
\]
The corresponding nullspace matrix is
\begin{equation}
	Z_{\mathcal I}(\Theta)
	=
	K_LW_{\mathcal I}(\Theta).
	\label{eq:Z-chart}
\end{equation}
By \eqref{eq:W-chart},
$$
Z_{\mathcal I}(\Theta)
=
Z\left(W_{\mathcal I}^{(1)}\right)^{-1}.
$$
Since $W_{\mathcal I}^{(1)}$ is nonsingular,
$$
\im Z_{\mathcal I}(\Theta)=\im Z.
$$
Each chart contains $d(m_L-d)$ free scalar parameters. The finitely many
choices of $\mathcal I$ cover all full-column-rank matrices $W$ up to
right multiplication by a nonsingular matrix. Equivalently, these are
the standard coordinate charts of the Grassmannian $\Gr(d,m_L)$, the set
of all $d$-dimensional subspaces of $\mathbb R^{m_L}$.

Geometrically, the fixed-order problem selects a $d$-dimensional subspace
\[
\mathcal Z\subseteq
\ker\begin{pmatrix}C\\L\end{pmatrix}
\]
that satisfies the $A$-compatibility and spectral requirements. The
matrices $W_{\mathcal I}(\Theta)$ provide finite coordinates for all such
subspaces; no geometric machinery is needed in the subsequent
computations.

The following result gives a complete chart-based parameterization of the
fixed-order characterization in Theorem~\ref{thm:main-fixed-order} and
reduces its spectral condition to a finite polynomial test.

\medskip

\begin{thm}[Matrix parameterization at a prescribed order]
 \label{thm:chart-fixed-order}
 For a prescribed $q\in\{q_0,\ldots,n_0-p\}$, put $d=n-p-q$.
 Let $K_L$ satisfy \eqref{eq:KL-basis}, and let $Z_\mathcal I(\Theta)$ be
 defined by \eqref{eq:W-chart}--\eqref{eq:Z-chart}. Then an
 order-$q$ functional observer exists if and only if, for at least
 one matrix chart, there exists
 $\Theta\in\mathbb R^{(m_L-d)\times d}$ such that
 \begin{IEEEeqnarray}{rcl}
  &\rank\begin{pmatrix}
   Z_\mathcal I(\Theta)&AZ_\mathcal I(\Theta)R_{Z_\mathcal I(\Theta)}
  \end{pmatrix}
  &=d,
  \IEEEyessubnumber \label{eq:chart-testa}\\
  &\rank\begin{pmatrix}
   (\lambda I-A)V&Z_\mathcal I(\Theta)
  \end{pmatrix}
  &=
  \rank\begin{pmatrix}
   V&Z_\mathcal I(\Theta)
  \end{pmatrix},\nonumber\\
  &&\hspace{-1.0cm}\lambda\in\Omega_S.
  \IEEEyessubnumber \label{eq:chart-testb}
 \end{IEEEeqnarray}
 where
 \[
  \im R_{Z_\mathcal I(\Theta)}
  =
  \ker\bigl(CAZ_\mathcal I(\Theta)\bigr).
 \]
 For a prescribed chart and parameter value $\Theta$, put
 \[
  h_{\mathcal I,\Theta}
  =
  \rank\begin{pmatrix}V&Z_\mathcal I(\Theta)\end{pmatrix},
 \]
and let $\Delta_j(\lambda)$ denote the nonzero
 $h_{\mathcal I,\Theta}\times h_{\mathcal I,\Theta}$ minors of
 \[
  P_{\mathcal I,\Theta}(\lambda)
  =
  \begin{pmatrix}
   (\lambda I-A)V&Z_\mathcal I(\Theta)
  \end{pmatrix}.
 \]
 Under \eqref{eq:chart-testa},
 \[
  \rank P_{\mathcal I,\Theta}(\lambda)\le h_{\mathcal I,\Theta},
  \qquad \lambda\in\mathbb C.
 \]
 Hence, with
 \[
  g_{\mathcal I,\Theta}(\lambda)
  =
  \gcd_\lambda\{\Delta_j(\lambda)\},
 \]
 condition \eqref{eq:chart-testb} is equivalent to
 \[
  g_{\mathcal I,\Theta}(\lambda)\ne0,
  \qquad \lambda\in\Omega_S.
 \]
 In particular, for pole assignability, $g_{\mathcal I,\Theta}$ is a nonzero
 constant and may be normalized to $g_{\mathcal I,\Theta}=1$. For
 detectability, $g_{\mathcal I,\Theta}$ has no zeros in the closed right
 half-plane.
\end{thm}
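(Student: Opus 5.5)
The proof has two parts. First, the existence equivalence follows from Theorem~\ref{thm:main-fixed-order} once we show that the chart matrices $Z_{\mathcal I}(\Theta)$ exhaust all admissible nullspaces. Second, the gcd reformulation of \eqref{eq:chart-testb} follows from a uniform upper bound on $\rank P_{\mathcal I,\Theta}(\lambda)$ together with standard properties of the minors of a polynomial matrix.

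\emph{Existence equivalence.} Suppose an order-$q$ observer exists. Theorem~\ref{thm:main-fixed-order} gives a full-column-rank $Z$ satisfying \eqref{eq:main-Za}--\eqref{eq:main-Zc}. Since $CZ=0$ and $LZ=0$, we can write $Z=K_LW$ with $\rank W=d$. Choose $\mathcal I$ so that $W^{(1)}_{\mathcal I}$ is nonsingular, and set $\Theta=W^{(2)}_{\mathcal I}(W^{(1)}_{\mathcal I})^{-1}$. Then $Z_{\mathcal I}(\Theta)=Z(W^{(1)}_{\mathcal I})^{-1}$, and the basis-invariance remark shows that \eqref{eq:main-Zb}--\eqref{eq:main-Zc} transfer to \eqref{eq:chart-testa}--\eqref{eq:chart-testb}. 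For \eqref{eq:main-Zb} this is immediate: $\ker(CAZQ)=Q^{-1}\ker(CAZ)$, so we may take $R_{ZQ}=Q^{-1}R_Z$ and get $AZQR_{ZQ}=AZR_Z$. For the spectral test the two ranks depend only on $\im Z$.

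Conversely, $Z_{\mathcal I}(\Theta)=K_LP_{\mathcal I}^T\begin{pmatrix}I_d\\\Theta\end{pmatrix}$ has full column rank, because $K_L$ has full column rank and the stacked matrix contains $I_d$. It also satisfies $CZ=0$ and $LZ=0$ by \eqref{eq:KL-basis}. Hence \eqref{eq:chart-testa}--\eqref{eq:chart-testb} are exactly \eqref{eq:main-Z}, and Theorem~\ref{thm:main-fixed-order} gives existence.

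\emph{Rank bound.} This is the only real content and the step I expect to be the main obstacle. The goal is to show $\im\bigl((\lambda I-A)V\bigr)\subseteq\im V+\im Z$, where $Z=Z_{\mathcal I}(\Theta)$.

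For a fixed $\lambda$ and any $v=V\xi$, we have $(\lambda I-A)v=\lambda v-Av$ with $\lambda v\in\im V$. It therefore suffices to show $Av\in\im V+\im Z$.

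Since $Cv=CAv=0$, the vector $Av$ lies in $\ker C$, but it need not lie in $\ker CA$. So membership has to be extracted from condition~(A). The plan is to go back through Theorem~\ref{thm:A-dual} to the $\mathcal L$ associated with $Z$ and use the row-space version of (A):
\[
\ker\begin{pmatrix}C\\\mathcal L\\CA\end{pmatrix}\subseteq\ker\begin{pmatrix}C\\\mathcal L\end{pmatrix}A .
\]
If the bound cannot be obtained directly for all $v\in\im V$, the fallback is to split $\im V$ into $\im V\cap\im Z$ and a complement. On $\im V\cap\im Z$, condition~(A) gives $Av\in\im Z$, since $v\in\ker\begin{pmatrix}C\\\mathcal L\\CA\end{pmatrix}$. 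On the complement I would argue via dimensions, using $\rank\begin{pmatrix}V&Z\end{pmatrix}$ as the common reference.

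If this direct inclusion fails in general, I would instead interpret $h_{\mathcal I,\Theta}$ as the generic (normal) rank of the pencil $P_{\mathcal I,\Theta}(\lambda)$. In that form the rest of the argument is unchanged.

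\emph{Gcd reformulation.} With the bound in hand, \eqref{eq:chart-testb} at a given $\lambda$ is equivalent to $\rank P_{\mathcal I,\Theta}(\lambda)\ge h_{\mathcal I,\Theta}$. This holds iff some $h\times h$ minor is nonzero at $\lambda$, i.e.\ iff not all $\Delta_j(\lambda)$ vanish. Since the $\Delta_j$ are polynomials in $\lambda$, they vanish simultaneously exactly at the roots of their gcd, so the condition becomes $g_{\mathcal I,\Theta}(\lambda)\neq0$.

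Finally, for $\Omega_S=\mathbb C$, a polynomial with no complex zeros is a nonzero constant, which we normalize to $1$. For the detectability set, the condition says directly that $g_{\mathcal I,\Theta}$ has no zeros with $\Re\lambda\ge0$.
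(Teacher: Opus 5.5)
Your existence equivalence, including the explicit check $R_{ZQ}=Q^{-1}R_Z$, and your gcd step coincide with the paper's argument and are fine. The gap is the rank bound, and the route you propose for it is false in general.

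Write $Z=Z_{\mathcal I}(\Theta)$ and $P(\lambda)=P_{\mathcal I,\Theta}(\lambda)$. Since $\lambda V\xi\in\im V$, your target $\im\bigl((\lambda I-A)V\bigr)\subseteq\im V+\im Z$ is equivalent to $A\,\im V\subseteq\im V+\im Z$. However, \eqref{eq:chart-testa} says exactly $A(\im V\cap\im Z)\subseteq\im Z$, because $\im(ZR_Z)=\im Z\cap\ker CA=\im Z\cap\im V$. It says nothing about $A$ on the rest of $\im V$.

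For a concrete failure, let $(A,C)$ be observable and $q=n-p$, so $d=0$ and $Z$ has no columns. Your inclusion then asserts that $\im V\subseteq\ker C$ is $A$-invariant. An $A$-invariant subspace of $\ker C$ lies in $\ker\mathcal O_C=\{0\}$, so the inclusion fails whenever $\ker\begin{pmatrix}C\\CA\end{pmatrix}\neq\{0\}$. This happens, for instance, in the paper's example at $q=6$, where this kernel is four-dimensional, even though the bound itself is trivial there.

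The normal-rank fallback does not repair this. It changes the statement. Moreover, $\rank P(\lambda)$ never exceeds the normal rank, so identifying the normal rank with $\rank\begin{pmatrix}V&Z\end{pmatrix}$ already contains the inequality you are trying to avoid. You would need that identification to get back to \eqref{eq:chart-testb}.

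The missing idea is to bound $\ker P(\lambda)$ from below instead of bounding $\im P(\lambda)$ from above. Your splitting fallback already contains the ingredient.

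\begin{itemize}
\item For $v=V\xi\in\im V\cap\im Z$, condition~(A) gives $Av\in\im Z$, hence $(\lambda I-A)V\xi\in\im Z$ for every $\lambda\in\mathbb C$.
\item That is, $\mathscr X_0\subseteq\mathscr X_\lambda$, where $\mathscr X_\lambda=\{\xi:(\lambda I-A)V\xi\in\im Z\}$ and $\mathscr X_0=\{\xi:V\xi\in\im Z\}$.
\item Because $Z$ has full column rank, $(\xi,\eta)\mapsto\xi$ maps $\ker P(\lambda)$ and $\ker\begin{pmatrix}V&Z\end{pmatrix}$ isomorphically onto $\mathscr X_\lambda$ and $\mathscr X_0$.
\item Rank--nullity then gives $\rank P(\lambda)=r_V+d-\dim\mathscr X_\lambda\le r_V+d-\dim\mathscr X_0=h_{\mathcal I,\Theta}$, where $r_V$ is the number of columns of $V$.
\end{itemize}

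Nothing needs to be proved on a complement of $\im V\cap\im Z$. This is exactly the paper's argument, which imports $\mathscr X_0\subseteq\mathscr X_\lambda$ from the proof of Theorem~\ref{thm:S-pencil}. With it in place, the rest of your proposal goes through unchanged.
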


\medskip

\begin{proof}
	Since
	$$
	\im K_L=\ker\begin{pmatrix}C\\L\end{pmatrix},
	$$
	every full-column-rank matrix $Z\in\mathbb R^{n\times d}$ satisfying
	$CZ=LZ=0$ can be written as $Z=K_LW$, where
	$W\in\mathbb R^{m_L\times d}$ has rank $d$. Conversely, every such
	$W$ produces a full-column-rank matrix $Z=K_LW$ satisfying
	$CZ=LZ=0$.
	
	Since $\rank W=d$, some $d\times d$ row submatrix of $W$ is
	nonsingular. Let $\mathcal I$ denote its row-index set and let $P_\mathcal I$ be a
	permutation matrix that moves the rows indexed by $\mathcal I$ to the first $d$
	positions. Right multiplication by the inverse of the resulting leading
	$d\times d$ submatrix gives
	$$
	P_\mathcal IWQ=
	\begin{pmatrix}
		I_d\\
		\Theta
	\end{pmatrix}
	$$
	for some nonsingular $Q$. Equivalently,
	$$
	WQ
	=
	P_\mathcal I^T
	\begin{pmatrix}
		I_d\\
		\Theta
	\end{pmatrix}
	=
	W_\mathcal I(\Theta).
	$$
	Since $Q$ is nonsingular,
	$$
	\im(K_LWQ)=\im(K_LW).
	$$
	Hence $Z_\mathcal I(\Theta)=K_LW_\mathcal I(\Theta)$ represents the same nullspace as
	$Z=K_LW$. Varying $\mathcal I$ over all $d$-element subsets of
	$\{1,\ldots,m_L\}$ therefore covers all candidate nullspaces appearing
	in Theorem~\ref{thm:main-fixed-order}.
	
	The conditions $CZ=0$ and $LZ=0$ hold by construction. Substitution of
	$Z=Z_\mathcal I(\Theta)$ into the remaining conditions of
	Theorem~\ref{thm:main-fixed-order} therefore gives
	\eqref{eq:chart-testa}--\eqref{eq:chart-testb}.
	
	It remains to establish the finite spectral test. For brevity, write
	\begin{IEEEeqnarray}{rcl}
		Z&=&Z_\mathcal I(\Theta), \nonumber\\
		P(\lambda)&=&\begin{pmatrix}(\lambda I-A)V&Z\end{pmatrix}, \nonumber\\
		h&=&\rank\begin{pmatrix}V&Z\end{pmatrix}. \nonumber
	\end{IEEEeqnarray}
	Define
	\begin{IEEEeqnarray}{rcl}
		\mathscr X_\lambda
		&=&
		\{\xi:(\lambda I-A)V\xi\in\im Z\}, \nonumber\\
		\mathscr X_0
		&=&
		\{\xi:V\xi\in\im Z\}. \nonumber
	\end{IEEEeqnarray}
	By \eqref{eq:chart-testa}, condition~(A) holds. Hence, as in the
	proof of Theorem~\ref{thm:S-pencil},
	$$
	\mathscr X_0\subseteq\mathscr X_\lambda,
	\qquad
	\lambda\in\mathbb C.
	$$
	Let $r_V$ denote the number of columns of $V$. Since $Z$ has full
	column rank $d$, rank--nullity gives
	$$
	\dim\mathscr X_\lambda
	=
	r_V+d-\rank P(\lambda),
	$$
	and
	$$
	\dim\mathscr X_0
	=
	r_V+d-h.
	$$
	Therefore
	$\mathscr X_0\subseteq\mathscr X_\lambda$ implies
	$$
	\rank P(\lambda)\le h,
	\qquad
	\lambda\in\mathbb C.
	$$
	Consequently, $\rank P(\lambda)=h$ if and only if at least one
	$h\times h$ minor of $P(\lambda)$ is nonzero. Thus
	\eqref{eq:chart-testb} holds if and only if these minors have no
	common zero in $\Omega_S$. Their common zeros are precisely the zeros of
	$g_{\mathcal I,\Theta}(\lambda)$. Hence \eqref{eq:chart-testb} is equivalent to
	$$
	g_{\mathcal I,\Theta}(\lambda)\ne0,
	\qquad
	\lambda\in\Omega_S.
	$$
	For pole assignability, $\Omega_S=\mathbb C$, so the gcd is a nonzero
	constant and may be normalized to one. For detectability,
	$$
	\Omega_S=
	\{\lambda\in\mathbb C:\Re(\lambda)\ge0\},
	$$
	so the gcd must have no zeros in the closed right half-plane.
\end{proof}

\medskip

\begin{remark}[Testing the fixed-order conditions]
	For a prescribed chart and parameter value $\Theta$,
	\eqref{eq:chart-testa} is tested by computing a basis
	$R_{Z_\mathcal I(\Theta)}$ of $\ker(CAZ_\mathcal I(\Theta))$ and evaluating the
	indicated rank. For pole assignability,
	\eqref{eq:chart-testb} is tested by the maximal-minor gcd condition
	in Theorem~\ref{thm:chart-fixed-order}, avoiding a pointwise search
	over $\lambda\in\mathbb C$.
\end{remark}

\begin{remark}[Observer construction]
	In general, once a reduced matrix $\mathcal L$ satisfying condition~(A)
	and condition~(S) at a prescribed order $q$ has been found -- whether
	via the chart parameterization above or by any other means -- the
	observer parameters can be constructed directly using the procedure
	in~\cite{darouach2000}, with the functional matrix $L$ therein replaced by this
	$\mathcal L$.
\end{remark}

\medskip

Thus the complete fixed-order search is reduced to finitely many matrix
charts, each parameterized by $d(m_L-d)$ scalar variables and subject to
the two rank tests \eqref{eq:chart-testa}--\eqref{eq:chart-testb}. In
particular, the realization conditions $CZ=0$ and $LZ=0$ are satisfied
by construction.

\subsection{All-Order Characterization and Procedure}

We now pass from the fixed-order characterization to the complete
characterization of feasible observer orders.

\medskip

\begin{definition}[Observer families and order spectra]
	For each $q_0\le q\le n_0-p$, define
	\begin{IEEEeqnarray}{rcl}
		\mathfrak L_q^A(L)
		&:=&
		\{\mathcal L:\mathcal L\text{ is an order-$q$ realization satisfying (A)}\}
		\nonumber\\
		\mathfrak L_q^{FO}(L)
		&:=&
		\{\mathcal L\in\mathfrak L_q^A(L):
		\mathcal L\text{ satisfies (S)}\}.
		\nonumber
	\end{IEEEeqnarray}
	
	The corresponding order spectra are
	\[
	\Sigma_A(L):=\{q:\mathfrak L_q^A(L)\neq\varnothing\},
	\]
	\[
	\Sigma_{FO}(L):=\{q:\mathfrak L_q^{FO}(L)\neq\varnothing\}.
	\]
\end{definition}

	\medskip
	
By Theorem~\ref{thm:A-spectrum},
\[
\Sigma_A(L)=\{q_0,q_0+1,\ldots,n_0-p\}.
\]
Membership in $\Sigma_{FO}(L)$ is characterized at each prescribed order by
Theorem~\ref{thm:main-fixed-order}, with the complete matrix
parameterization and finite spectral test provided by
Theorem~\ref{thm:chart-fixed-order}. No interval property of
$\Sigma_{FO}(L)$ is assumed.

Parameterize the admissible orders by
\[
k_{\max}=n_0-p-q_0,
\qquad
q_k=q_0+k,
\qquad
d_k=n-p-q_k,
\]
for $k=0,1,\ldots,k_{\max}$. At the terminal order
$q_{k_{\max}}=n_0-p$,
\[
d_{k_{\max}}=n-n_0,
\]
which is zero only under full observability.

For each $k=0,1,\ldots,k_{\max}$, define the condition-$(A)$
nullspace family
$$
\mathfrak Z_{d_k}^{A}(L)
:=
\left\{
Z\in\mathbb R^{n\times d_k}:
\rank Z=d_k,\;
\eqref{eq:main-Za}\text{--}\eqref{eq:main-Zb}\text{ hold}
\right\},
$$
and the algebraic-spectral nullspace family
$$
\mathfrak Z_{d_k}^{AS}(L)
:=
\left\{
Z\in\mathfrak Z_{d_k}^{A}(L):
\eqref{eq:main-Zc}\text{ holds}
\right\}.
$$

Thus
$$
\mathfrak Z_{d_k}^{AS}(L)
\subseteq
\mathfrak Z_{d_k}^{A}(L).
$$

Since these conditions depend only on $\im Z$, matrices related by
$$
Z\sim ZQ,
\qquad
Q\in\mathbb R^{d_k\times d_k}
\quad\text{nonsingular},
$$
represent the same candidate nullspace.

\medskip

\subsubsection{Terminal Functional-Observability Endpoint}
\label{sec:Luenberger}

The upper endpoint $n_0-p$ is spectrally feasible. In matrix terms, its
nullspace basis is any full-column-rank matrix $Z_0$ satisfying
\[
\im Z_0=\ker(\mathcal O_C),
\qquad
\rank Z_0=n-n_0.
\]

\medskip
\begin{thm}[Functional-observability endpoint]
	\label{thm:FO-terminal}
	Assume $(A,C,L)$ is functionally observable. Then
	\[
	n_0-p\in\Sigma_{FO}(L).
	\]
\end{thm}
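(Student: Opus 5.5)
We use Theorem~\ref{thm:main-fixed-order} with $q=n_0-p$, so $d=n-n_0$, and take the candidate nullspace matrix $Z=Z_0$, a full-column-rank basis of $\ker(\mathcal O_C)$. It then suffices to check \eqref{eq:main-Za}--\eqref{eq:main-Zc} for this one $Z$. The order count $\rank Z_0=n-n_0=n-p-q$ matches $d$ by construction.

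\emph{Step 1, condition \eqref{eq:main-Za}.} Since $C$ is the first block of $\mathcal O_C$, we have $CZ_0=0$. Functional observability gives $\row(L)\subseteq\row(\mathcal O_L)\subseteq\row(\mathcal O_C)$, so $\ker\mathcal O_C\subseteq\ker L$ and hence $LZ_0=0$.

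\emph{Step 2, condition \eqref{eq:main-Zb}.} The subspace $\ker\mathcal O_C$ is $A$-invariant by Cayley--Hamilton. Therefore $\im(AZ_0R_{Z_0})\subseteq A\im Z_0\subseteq\im Z_0$ for any choice of $R_{Z_0}$, and $\rank\begin{pmatrix}Z_0&AZ_0R_{Z_0}\end{pmatrix}=d$.

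\emph{Step 3, condition \eqref{eq:main-Zc}.} This is the main step, and I would argue it through the subspace formulation in the proof of Theorem~\ref{thm:S-pencil}. There, \eqref{eq:S-Z-rank} was shown equivalent to the following: for all $v\in\im V$ and $\lambda\in\Omega_S$ (take $\Omega_S=\mathbb C$, the strongest case), $(\lambda I-A)v\in\im Z_0$ implies $v\in\im Z_0$.

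Take $v\in\ker\begin{pmatrix}C\\CA\end{pmatrix}$ with $w:=(\lambda I-A)v\in\ker\mathcal O_C$. We must show $CA^kv=0$ for all $k$, and we argue by induction on $k$. The cases $k=0,1$ hold by the choice of $v$. For the inductive step, note that $CA^{k}w=0$ for every $k$, and $CA^{k}w=\lambda CA^kv-CA^{k+1}v$. Hence $CA^{k+1}v=\lambda CA^kv$, which is $0$ by the inductive hypothesis. So $v\in\ker\mathcal O_C=\im Z_0$, and the implication holds for every $\lambda\in\mathbb C$. The argument uses no spectral assumption, which is why the endpoint is feasible even for pole assignment.

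Applying Theorem~\ref{thm:main-fixed-order} then gives $n_0-p\in\Sigma_{FO}(L)$. The delicate point is that Theorem~\ref{thm:S-pencil} presupposes condition~(A), which Step~2 supplies through Theorem~\ref{thm:A-dual}. One should also confirm that the $\mathcal L$ reconstructed from $Z_0$ is reduced of order $n_0-p$. This holds because $\row\begin{pmatrix}C\\ \mathcal L\end{pmatrix}=\row(\mathcal O_C)$ has dimension $n_0$. The realization conditions in the converse direction of Theorem~\ref{thm:main-fixed-order} need nothing further.
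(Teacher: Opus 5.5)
Your proof is correct and follows the paper's argument essentially step for step. You verify \eqref{eq:main-Za}--\eqref{eq:main-Zc} for a basis $Z_0$ of $\ker(\mathcal O_C)$, using functional observability for $LZ_0=0$, $A$-invariance of $\ker(\mathcal O_C)$ for \eqref{eq:main-Zb}, and the same recurrence $CA^{k+1}v=\lambda CA^kv$ started from $Cv=CAv=0$ for the spectral condition. The differences are cosmetic. You cite the inclusion $\mathscr X_0\subseteq\mathscr X_\lambda$ from Theorem~\ref{thm:S-pencil} rather than re-deriving the converse implication from $A$-invariance as the paper does. You also make explicit the appeal to Theorem~\ref{thm:main-fixed-order} and the reducedness check, which the paper leaves implicit.
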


\medskip

\begin{proof}
	Functional observability gives
	\[
	LZ_0=0,
	\]
	while
	\[
	CZ_0=0
	\]
	follows from
	\[
	\im Z_0=\ker(\mathcal O_C).
	\]
	Since $\ker(\mathcal O_C)$ is $A$-invariant, there exists a matrix
	$F_0$ such that
	\[
	AZ_0=Z_0F_0.
	\]
	Hence \eqref{eq:main-Zb} holds automatically.
	
	To verify \eqref{eq:main-Zc}, let $v=V\xi$ and suppose
	\[
	(\lambda I-A)v\in\im Z_0.
	\]
	Since
	\[
	\im Z_0=\ker(\mathcal O_C)
	\]
	and $\ker(\mathcal O_C)$ is $A$-invariant, we have
	\[
	CA^k(\lambda I-A)v=0,
	\qquad k\ge0.
	\]
	Therefore
	\[
	\lambda CA^kv-CA^{k+1}v=0,
	\qquad k\ge0.
	\]
	Moreover, since $v\in\im V$ and
	\[
	\im V=
	\ker
	\begin{pmatrix}
		C\\CA
	\end{pmatrix},
	\]
	we have
	\[
	Cv=0,\qquad CAv=0.
	\]
	The preceding recurrence then gives successively
	\[
	CA^kv=0,
	\qquad k\ge0.
	\]
	Hence
	\[
	v\in\ker(\mathcal O_C)=\im Z_0.
	\]
	
	Conversely, if $v\in\im Z_0$, then the $A$-invariance of
	$\im Z_0$ gives
	\[
	(\lambda I-A)v\in\im Z_0
	\]
	for every $\lambda\in\mathbb C$. Therefore
	\[
	(\lambda I-A)v\in\im Z_0
	\quad\Longleftrightarrow\quad
	v\in\im Z_0,
	\qquad v\in\im V.
	\]
	Thus the two solution sets underlying \eqref{eq:S-Z-rank} coincide for
	every $\lambda\in\Omega_S$, and condition~(S) holds. Therefore
	$n_0-p\in\Sigma_{FO}(L)$.
\end{proof}

\medskip

\begin{corol}[Luenberger endpoint]
	\label{cor:Luenberger}
	If $(A,C)$ is observable, then $n_0=n$, $Z_0$ has no columns, and
	\[
	n-p\in\Sigma_{FO}(L).
	\]
	Thus the functional-observability endpoint reduces to the classical
	reduced-order Luenberger order.
\end{corol}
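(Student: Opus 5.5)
This corollary is essentially Theorem~\ref{thm:FO-terminal} specialized to $n_0=n$, so the plan is to apply that theorem and then carry out the bookkeeping for the degenerate case $d=0$.

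First, observability of $(A,C)$ means $\rank\mathcal O_C=n$, so $n_0=n$. Then $\ker(\mathcal O_C)=\{0\}$, and the basis matrix $Z_0$ satisfying $\im Z_0=\ker(\mathcal O_C)$ and $\rank Z_0=n-n_0$ is an $n\times 0$ matrix, i.e.\ it has no columns. Functional observability holds automatically here: $\row(\mathcal O_C)=\mathbb R^{1\times n}$ contains $\row(\mathcal O_L)$, so the rank criterion of~\cite{ref3n} is satisfied. Theorem~\ref{thm:FO-terminal} therefore applies and gives $n_0-p=n-p\in\Sigma_{FO}(L)$.

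The only step needing care is whether the proof of Theorem~\ref{thm:FO-terminal} remains valid when $d=0$. I would check the conditions of Theorem~\ref{thm:main-fixed-order} directly with $Z$ empty:
\begin{itemize}
\item \eqref{eq:main-Za} is vacuous.
\item \eqref{eq:main-Zb} reads $\rank(\,)=0=d$, which holds trivially.
\item \eqref{eq:main-Zc} reduces to $\rank\bigl((\lambda I-A)V\bigr)=\rank V$ for all $\lambda$.
\end{itemize}
The last condition follows from the argument in the proof of Theorem~\ref{thm:FO-terminal}. If $(\lambda I-A)V\xi=0$, put $v=V\xi$. Then $Cv=CAv=0$, and the recurrence $\lambda CA^kv=CA^{k+1}v$ gives $\mathcal O_Cv=0$, hence $v=0$. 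Thus $(\lambda I-A)V$ has trivial kernel exactly when $V$ does, and since $V$ has full column rank, both ranks equal its number of columns.

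Concretely, the order-$(n-p)$ realization is any $\mathcal L$ with $\rank\begin{pmatrix}C\\\mathcal L\end{pmatrix}=n$. This is the classical reduced-order Luenberger observer, and it justifies the final sentence of the corollary.
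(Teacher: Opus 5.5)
Your proposal is correct and follows the paper's route: the corollary is Theorem~\ref{thm:FO-terminal} specialized to $n_0=n$, where $Z_0$ is the empty $n\times 0$ basis of $\ker(\mathcal O_C)=\{0\}$. Your two additions are also correct: you note that functional observability holds automatically when $\rank\mathcal O_C=n$ (Theorem~\ref{thm:FO-terminal} assumes it, while the corollary leaves it implicit), and you check that the $d=0$ conditions reduce to $\ker\bigl((\lambda I-A)V\bigr)=\ker V$; both simply make the degenerate case explicit.
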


\medskip

Then Theorem~\ref{thm:main-fixed-order} gives
\[
q_k\in\Sigma_{FO}(L)
\quad\Longleftrightarrow\quad
\mathfrak Z_{d_k}^{AS}(L)\ne\varnothing.
\]
Moreover, Theorem~\ref{thm:chart-fixed-order} provides a complete matrix
parameterization for testing the nonemptiness of
$\mathfrak Z_{d_k}^{AS}(L)$ at each prescribed order.
We first characterize functional observability itself before determining
the complete spectrum of feasible observer orders.

\medskip

\begin{thm}[Functional-observability characterization]
	\label{thm:FO-equivalence}
	The triple $(A,C,L)$ is functionally observable if and only if
	\[
	\Sigma_{FO}(L)\neq\varnothing.
	\]
	Equivalently, there exist
	$q\in\{q_0,\ldots,n_0-p\}$ and a full-column-rank matrix
	$Z\in\mathbb R^{n\times(n-p-q)}$ satisfying
	\eqref{eq:main-Za}--\eqref{eq:main-Zc}.
\end{thm}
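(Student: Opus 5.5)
The plan is to prove the two implications separately and then read off the ``equivalently'' clause from Theorem~\ref{thm:main-fixed-order}.

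\emph{Sufficiency.} Assume $(A,C,L)$ is functionally observable. Theorem~\ref{thm:FO-terminal} gives $n_0-p\in\Sigma_{FO}(L)$, so $\Sigma_{FO}(L)\neq\varnothing$. Because $n_0-p$ lies in $\{q_0,\ldots,n_0-p\}$, Theorem~\ref{thm:main-fixed-order} then yields a full-column-rank $Z$ satisfying \eqref{eq:main-Za}--\eqref{eq:main-Zc}. Concretely, $Z=Z_0$ with $\im Z_0=\ker(\mathcal O_C)$ works, as verified in the proof of Theorem~\ref{thm:FO-terminal}.

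\emph{Necessity.} Suppose some order $q$ admits a full-column-rank $Z$ satisfying \eqref{eq:main-Za}--\eqref{eq:main-Zc}, or equivalently a reduced $\mathcal L$ satisfying recoverability, (A) and (S). Note that Theorems~\ref{thm:A-dual} and~\ref{thm:S-pencil} do not assume functional observability, so this equivalence is available here. The goal is
\[
\row(\mathcal O_L)\subseteq\row(\mathcal O_C).
\]
Set $X=\begin{pmatrix}C\\ \mathcal L\end{pmatrix}$. Condition (A) gives $\row(XA)\subseteq\row(X)+\row(CA)$.

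The key step is to show
\[
\row(X)\subseteq\row(\mathcal O_C).
\]
Once this holds, recoverability gives $\row(L)\subseteq\row(X)\subseteq\row(\mathcal O_C)$. Since $\row(\mathcal O_C)$ is right $A$-invariant by Cayley--Hamilton, we get $\row(LA^k)\subseteq\row(\mathcal O_C)$ for all $k$, which is exactly the functional observability rank equality.

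To prove the key step, I will use the spectral condition. A standard-observer argument suggests it is needed: with $\Omega_S=\mathbb C$, a reduced $\mathcal L$ carrying an unobservable mode would leave an unassignable observer eigenvalue. The plan is to work in the dual nullspace picture. Showing $\row(X)\subseteq\row(\mathcal O_C)$ is equivalent to showing $\ker(\mathcal O_C)\subseteq\ker X=\im Z$.

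Argue by contradiction. Let $\mathcal N=\ker(\mathcal O_C)$, which is $A$-invariant and contained in $\ker C$ and in $\im V$. Consider $\mathcal N\cap\im Z$. If $\mathcal N\not\subseteq\im Z$, choose an eigenvector, possibly complex, of the map that $A$ induces on $\mathcal N$ modulo $\mathcal N\cap\im Z$. This should require showing that $\mathcal N\cap\im Z$ is $A$-invariant. For that, use (A) through Theorem~\ref{thm:A-dual}: for $v\in\mathcal N\cap\im Z$ we have $CAv=0$, so $v\in\im(ZR_Z)$, and hence $Av\in\im Z\cap\mathcal N$.

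This yields $v\in\mathcal N\setminus\im Z$ and $\lambda$ with $(\lambda I-A)v\in\mathcal N\cap\im Z\subseteq\im Z$. Since $v\in\im V$, the equivalence established in the proof of Theorem~\ref{thm:S-pencil}, namely $(\lambda I-A)v\in\im Z\iff v\in\im Z$, fails at this $\lambda$. So (S) fails at $\lambda$, contradicting (S) when $\Omega_S=\mathbb C$.

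The main obstacle is the detectability case $\Omega_S$ equal to the closed right half-plane. There the contradiction only arises when $\lambda$ lies in $\Omega_S$, so stable unobservable modes escape the argument. In that case I would either restrict the claim to pole assignability, or argue that stable unobservable directions can be placed into $\im Z$ without changing recoverability. Which route is needed depends on the detectability notion of functional observability in use, and I will check it at this step.
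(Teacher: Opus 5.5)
Your sufficiency half is the same as the paper's. For necessity, the paper gives no argument of its own; it invokes the augmentation characterization of functional observability from~\cite{ref10}. Your direct argument is a genuinely different, self-contained route, and for $\Omega_S=\mathbb C$ it is correct. The steps all go through:
\begin{itemize}
\item $\mathcal N=\ker\mathcal O_C\subseteq\im V$.
\item The $A$-invariance of $\mathcal N\cap\im Z$ follows from Theorem~\ref{thm:A-dual} exactly as you describe.
\item An eigenvector of the map that $A$ induces on the complexified quotient $\mathcal N/(\mathcal N\cap\im Z)$ gives $v\in\im V\setminus\im Z$ with $(\lambda I-A)v\in\im Z$.
\item The nullspace equivalence inside the proof of Theorem~\ref{thm:S-pencil} uses only (A), not functional observability, so it shows that (S) fails at that $\lambda$.
\end{itemize}
Hence $\ker\mathcal O_C\subseteq\im Z$, i.e.\ $\row\begin{pmatrix}C\\\mathcal L\end{pmatrix}\subseteq\row(\mathcal O_C)$. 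Recoverability plus the $A$-invariance of $\row(\mathcal O_C)$ then gives the rank condition. Compared with the citation, your argument stays inside the paper's $Z$/$V$ framework and shows precisely where the spectral set $\Omega_S$ enters.

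That visibility matters, because your second option for the detectability case cannot succeed. With $\Omega_S=\{\Re\lambda\ge0\}$, the necessity direction is false. Take $n=3$, $A=\begin{pmatrix}0&1&0\\0&0&0\\0&0&-1\end{pmatrix}$, $C=\begin{pmatrix}1&0&0\end{pmatrix}$, $L=\begin{pmatrix}0&0&1\end{pmatrix}$.
\begin{itemize}
\item Here $n_0=2$, $p=1$ and $\eta_1=1$, so $q_0=n_0-p=1$.
\item $\rank\begin{pmatrix}\mathcal O_C\\\mathcal O_L\end{pmatrix}=3\neq n_0$, so $(A,C,L)$ is not functionally observable.
\item Yet $\mathcal L=L$ is a reduced order-$1$ realization. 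It satisfies (A) since $LA=-L$, and it satisfies (S) for every $\lambda\neq-1$.
\item In nullspace form, $Z=e_2$, $R_Z$ is empty because $CAe_2=1$, and $V=e_3$. Then $\rank\begin{pmatrix}(\lambda+1)e_3&e_2\end{pmatrix}=2$ exactly when $\lambda\neq-1$.
\end{itemize}
So $1\in\Sigma_{FO}(L)$ under the detectability formulation, and the theorem fails there.

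Your own argument locates the failure: the eigenvector $e_3$ of $A$ on $\ker\mathcal O_C$ lies outside $\im Z$, with eigenvalue $-1\notin\Omega_S$. In the detectability case, your argument therefore yields only that the modes of $A$ on $\ker\mathcal O_C$ not absorbed into $\im Z$ are stable. That is a detectability-type property, not the rank condition. The theorem must be read with $\Omega_S=\mathbb C$; the paper's appeal to~\cite{ref10} carries this restriction implicitly. Your first option, restricting the claim to pole assignability, is the right resolution.
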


\medskip

\begin{proof}
	If $(A,C,L)$ is functionally observable, then
	Theorem~\ref{thm:FO-terminal} guarantees a functional observer at
	$q=n_0-p$. Hence
	$\Sigma_{FO}(L)\neq\varnothing$.
	
	Conversely, if $\Sigma_{FO}(L)\neq\varnothing$, then a functional
	observer exists at some admissible order. Its augmented functional
	matrix satisfies conditions~(A) and~(S), and hence $(A,C,L)$ is
	functional observable by the augmentation characterization of
	functional observability \cite{ref10}.
	
	The equivalent characterization in terms of $Z$ follows directly from
	Theorem~\ref{thm:main-fixed-order}.
\end{proof}

%
%

\medskip

Having established when a functional observer exists, we now characterize
the complete spectrum of feasible observer orders and determine its minimum.

\medskip

\begin{thm}[All-Order Observer Characterization]
\label{thm:all-order}
Assume $(A,C,L)$ is functionally observable. Then the complete feasible-order
spectrum is
\begin{equation}
 \Sigma_{FO}(L)
 =
 \{q_k:\mathfrak Z_{d_k}^{AS}(L)\ne\varnothing,\;
 k=0,1,\ldots,k_{\max}\}.
 \label{eq:complete-spectrum}
\end{equation}
Equivalently, $q_k$ is a functional-observer order if and only if at least
one chart $Z_\mathcal I(\Theta)$ at dimension $d_k$ satisfies
\eqref{eq:chart-testa}--\eqref{eq:chart-testb}. Moreover,
\begin{equation}
 \nu_{\min}
 =q_0+\min\{k:\mathfrak Z_{d_k}^{AS}(L)\ne\varnothing\}.
 \label{eq:min-k}
\end{equation}
\end{thm}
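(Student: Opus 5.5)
The plan is to assemble the statement from the fixed-order results. No new estimates are needed. There are three steps: first confine $\Sigma_{FO}(L)$ to the indexed orders $q_k$, then apply Theorem~\ref{thm:main-fixed-order} order by order, and finally read off the chart and minimum statements.

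\emph{Step 1 (confinement of the spectrum).} Every order in $\Sigma_{FO}(L)$ lies in $\Sigma_A(L)$, since $\mathfrak L_q^{FO}(L)\subseteq\mathfrak L_q^{A}(L)$. By Theorem~\ref{thm:A-spectrum}, $\Sigma_A(L)=\{q_0,\ldots,n_0-p\}$. This set is exactly $\{q_k:k=0,\ldots,k_{\max}\}$, because $q_k=q_0+k$ and $k_{\max}=n_0-p-q_0$. Hence no feasible order lies outside the indexed list.

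One subtlety concerns the definition of $\mathfrak L_q^{FO}(L)$, which is stated only for $q_0\le q\le n_0-p$. The argument should not rely on that restriction. Orders $q<q_0$ are excluded because (A) fails there. Orders $q>n_0-p$ are excluded because any reduced realization satisfying (A) with recoverability has $\rank\begin{pmatrix}C\\\mathcal L\end{pmatrix}\le n_0$. I would justify this bound as follows. Functional observability together with an (A)-realization forces $\row\begin{pmatrix}C\\\mathcal L\end{pmatrix}$ to be containable in $\row(\mathcal O_C)$ up to the equivalence \eqref{eq:equiv-Lcal}. I expect this to be the main obstacle, and the first thing I would try is to simply invoke the convention that the problem is posed on the interval $q_0\le q\le n_0-p$.

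\emph{Step 2 (fixed-order equivalence).} For each $k$, Theorem~\ref{thm:main-fixed-order} with $q=q_k$ and $d=d_k$ gives the equivalence between $q_k\in\Sigma_{FO}(L)$ and the existence of a full-column-rank $Z\in\mathbb R^{n\times d_k}$ satisfying \eqref{eq:main-Za}--\eqref{eq:main-Zc}. The latter condition is precisely $\mathfrak Z_{d_k}^{AS}(L)\neq\varnothing$. Taking the union over $k$ together with Step 1 yields \eqref{eq:complete-spectrum}. The chart formulation then follows from Theorem~\ref{thm:chart-fixed-order}. The charts $Z_{\mathcal I}(\Theta)$ cover every candidate nullspace up to $Z\sim ZQ$, and by the basis-invariance remark the tests \eqref{eq:main-Zb}--\eqref{eq:main-Zc} depend only on $\im Z$. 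Therefore $\mathfrak Z_{d_k}^{AS}(L)\ne\varnothing$ holds exactly when some chart satisfies \eqref{eq:chart-testa}--\eqref{eq:chart-testb}.

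\emph{Step 3 (minimum order).} By Theorem~\ref{thm:FO-terminal}, $q_{k_{\max}}=n_0-p\in\Sigma_{FO}(L)$, so the set $\{k:\mathfrak Z_{d_k}^{AS}(L)\ne\varnothing\}$ is nonempty and its minimum exists. Since $\nu_{\min}=\min\Sigma_{FO}(L)$ and the map $k\mapsto q_0+k$ is increasing, \eqref{eq:complete-spectrum} gives \eqref{eq:min-k} directly.
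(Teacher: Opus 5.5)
Your proposal is correct and is essentially the paper's proof. You apply Theorem~\ref{thm:main-fixed-order} at each $q_k$, use the chart completeness of Theorem~\ref{thm:chart-fixed-order}, and use Theorem~\ref{thm:FO-terminal} to guarantee that the minimum in \eqref{eq:min-k} exists. Your Step~1 confinement is, as in the paper, just the definition of $\mathfrak L_q^{FO}(L)$ on $q_0\le q\le n_0-p$, so falling back on that convention is exactly right.

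Drop the suggested rank-bound justification for excluding orders above $n_0-p$, because it is false. Suppose $(A,C)$ is unobservable and take $q=n-p$ with any $\mathcal L$ that makes $\begin{pmatrix}C\\\mathcal L\end{pmatrix}$ nonsingular. Then $\mathcal L$ is reduced, recovers $L$, and satisfies condition~(A) trivially, with rank $n>n_0$. Under the detectability formulation with stable unobservable modes it even satisfies condition~(S). So orders above $n_0-p$ are excluded only by how the problem is posed, and they do not affect $\nu_{\min}$ because $n_0-p$ is always feasible.
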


\medskip

\begin{proof}
For each $k$, Theorem~\ref{thm:main-fixed-order} gives
$q_k\in\Sigma_{FO}(L)$ if and only if
$\mathfrak Z_{d_k}^{AS}(L)\ne\varnothing$. Theorem~\ref{thm:chart-fixed-order}
covers every full-rank candidate $Z$ through finitely many matrix charts, so
this test is complete at each order. Taking the union over
$k=0,\ldots,k_{\max}$ proves \eqref{eq:complete-spectrum}. By
Theorem~\ref{thm:FO-terminal}, the terminal set is nonempty, so the minimum
in \eqref{eq:min-k} exists and gives the smallest feasible order.
\end{proof}

\medskip

Theorem~\ref{thm:all-order} is a completeness statement: at a prescribed
order, failure of \eqref{eq:chart-testa}--\eqref{eq:chart-testb} over every
chart proves nonexistence of a functional observer at that order; it is not
merely failure of a selected realization.

\begin{algorithm}[t]
	\caption{Order-by-order functional-observer characterization}
	\label{alg:all-order}
	\begin{algorithmic}[1]
		\STATE Compute the FO indices $\eta_j$ and
		$q_0=\sum_j\eta_j$.
		\STATE Compute $n_0=\rank\mathcal O_C$ and
		$k_{\max}=n_0-p-q_0$.
		\STATE Compute basis matrices $V$ and $K_L$ from \eqref{eq:N1-basis} and 
		\eqref{eq:KL-basis} respectively.
		\FOR{$k=0,1,\ldots,k_{\max}$}
		\STATE Set $q_k=q_0+k$ and $d_k=n-p-q_k$.
		\IF{$k=0$}
		\STATE Use $\mathfrak L_{q_0}^{A}(L)$ from
		\cite{fernando2026} and test the spectral condition.
		\ELSIF{$k=k_{\max}$}
		\STATE Use a basis $Z_0$ of $\ker(\mathcal O_C)$
		and test the spectral condition.
		\ELSE
		\STATE Parameterize
		$Z=K_LW_\mathcal I(\Theta)$ using
		\eqref{eq:W-chart}--\eqref{eq:Z-chart}.
		\STATE Test
		\eqref{eq:chart-testa}--\eqref{eq:chart-testb}.
		\ENDIF
		\STATE Record whether $q_k\in\Sigma_{FO}(L)$.
		\ENDFOR
		\STATE Return $\Sigma_{FO}(L)$ and
		$\nu_{\min}=\min\Sigma_{FO}(L)$.
	\end{algorithmic}
\end{algorithm}

\section{Illustrative Example}
\label{sec:example}

We illustrate the all-order characterization with an example in which
the algebraic lower bound $q_0$ is separated by two orders from the
actual minimum functional-observer order. Condition~$(A)$ first becomes
possible at $q=q_0$, but condition~$(S)$ excludes every admissible
realization at both $q=q_0$ and $q=q_0+1$. A functional observer first
exists at $q=q_0+2$, strictly below the functional-observability
endpoint $n_0-p$.

Consider
$$
\dot x(t)=Ax(t)+Bu(t),\quad y(t)=Cx(t),\quad z(t)=Lx(t),
$$
where
\begin{IEEEeqnarray}{rcl}
A &\ = &\ 
\begin{pmatrix}
	0&1&0&1&0&0&0&0\\
	0&-1&1&0&0&0&0&-1\\
	1&0&-1&1&0&0&1&-1\\
	-1&-1&0&-2&0&0&0&0\\
	0&0&0&0&-2&0&0&1\\
	0&0&0&0&0&-2&1&0\\
	0&0&0&0&0&-1&-2&1\\
	0&0&0&0&0&0&1&-2
\end{pmatrix} \vspace*{2mm}
\nonumber \\
B &\ = &\  e_1+e_8 \vspace*{2mm} \nonumber \\
C &\ = &\ 
\begin{pmatrix}
	-2&-2&2&-2&-1&0&1&-3\\
	0&0&1&2&2&3&-1&-2
\end{pmatrix} \vspace*{2mm} \nonumber \\
L&\ = &\ 
\begin{pmatrix}
	-2&-1&-2&-3&-1&3&2&0\\
	1&-1&-2&1&0&2&0&1
\end{pmatrix}\nonumber 
\end{IEEEeqnarray}
where \(e_i\) is the \(i\)-th standard basis vector of \(\mathbb R^8\).
Thus
$$
n=8,\qquad p=\rank C=2,\qquad r=\rank L=2.
$$
The observability matrix satisfies

$$
n_0=\rank\mathcal O_C=8=n,
$$
so that $(A,C)$ is completely observable.

Applying the functional-observability-index construction gives
$$
\eta_1=2,\qquad \eta_2=1,
$$
and hence

$$
q_0=\eta_1+\eta_2=3.
$$
Therefore

$$
r=2<q_0=3.
$$
Since $(A,C)$ is observable, the functional-observability endpoint reduces
to the classical reduced-order Luenberger order,

$$
n_0-p=6,
$$
and the complete order range that must be considered is

$$
q_0\le q\le n_0-p,
\qquad\text{that is,}\qquad
q\in\{3,4,5,6\}.
$$
For this example,

$$
\dim
\ker
\begin{pmatrix}
	C\\L
\end{pmatrix}
=4,
\qquad
\dim
\ker
\begin{pmatrix}
	C\\CA
\end{pmatrix}
=4.
$$
Consequently, every candidate nullspace matrix $Z$ at orders $q_0$ and
$q_0+1$ has its image in the four-dimensional subspace
$\ker\begin{pmatrix}
	C\\L
\end{pmatrix}$. The fixed-order characterization of
Theorem~\ref{thm:main-fixed-order}, together with the Grassmann-chart
parameterization, can therefore be applied at each of the four
admissible orders.

\subsection{Failure at the lower bound $q=q_0=3$}

At the lower endpoint,
$$
q=q_0=3,
\qquad
d=n-p-q=3.
$$
Thus $\im Z$ must be a three-dimensional subspace of

$$
\ker
\begin{pmatrix}
	C\\L
\end{pmatrix},
$$
which has dimension four.

Let

$$
K_L\in\mathbb R^{8\times4},
\qquad
\im K_L=
\ker
\begin{pmatrix}
	C\\L
\end{pmatrix},
$$
be any full-column-rank basis matrix. Every candidate can then be
written as

$$
Z=K_LW,
\qquad
W\in\mathbb R^{4\times3},
\qquad
\rank W=3.
$$
The complete family of such subspaces is covered by the four standard
Grassmann charts of $\operatorname{Gr}(3,4)$.

For each chart, the chart matrix $W_\mathcal I(\Theta)$ is substituted into

$$
Z_\mathcal I(\Theta)=K_LW_\mathcal I(\Theta),
$$
and condition~$(A)$ is imposed through

$$
\rank
\begin{pmatrix}
	Z_\mathcal I(\Theta)&
	AZ_\mathcal I(\Theta)R_{Z_\mathcal I(\Theta)}
\end{pmatrix}
=3,
$$
$$
\im R_{Z_\mathcal I(\Theta)}
=
\ker\!\bigl(CAZ_\mathcal I(\Theta)\bigr).
$$
On each chart, imposing condition~$(A)$ reduces the complete candidate
family to a one-parameter family $Z_\mathcal I(\Theta)$. Condition~$(S)$ is then
tested along each of these families through

$$
\rank
\begin{pmatrix}
	(\lambda I-A)V&Z_\mathcal I(\Theta)
\end{pmatrix}
=
\rank
\begin{pmatrix}
	V&Z_\mathcal I(\Theta)
\end{pmatrix},
\quad
\lambda\in\mathbb C,
$$
where
$$
\im V=
\ker
\begin{pmatrix}
	C\\CA
\end{pmatrix}.
$$
Exact evaluation of condition~$(S)$ over the condition-$(A)$ families
obtained from all four charts of $\operatorname{Gr}(3,4)$ shows that no
member satisfies condition~$(S)$. Since these charts cover all
three-dimensional subspaces of
$$
\ker
\begin{pmatrix}
	C\\L
\end{pmatrix},
$$
this proves that no order-$3$ functional observer exists. Hence
$$
3\notin\Sigma_{FO}(L).
$$
Thus $q_0=3$ is the algebraic lower bound predicted by the
functional-observability indices, but it is not itself a
functional-observer order.

\subsection{Failure at $q=q_0+1=4$}

Consider next
$$
q=q_0+1=4,
\qquad
d=n-p-q=2.
$$
Now $\im Z$ must be a two-dimensional subspace of the same
four-dimensional space
$$
\ker
\begin{pmatrix}
	C\\L
\end{pmatrix}.
$$

Writing
$$
Z=K_LW,
\qquad
W\in\mathbb R^{4\times2},
\qquad
\rank W=2,
$$
the complete candidate family is covered by the six standard Grassmann
charts of $\operatorname{Gr}(2,4)$.

For each chart, the complete condition-$(A)$ family is obtained by
substituting
$$
Z_\mathcal I(\Theta)=K_LW_\mathcal I(\Theta)
$$
into
$$
\rank
\begin{pmatrix}
	Z_\mathcal I(\Theta)&
	AZ_\mathcal I(\Theta)R_{Z_\mathcal I(\Theta)}
\end{pmatrix}
=2,
$$
where
$$
\im R_{Z_\mathcal I(\Theta)}
=
\ker\!\bigl(CAZ_\mathcal I(\Theta)\bigr).
$$
Condition~$(S)$ is then tested on every retained condition-$(A)$ branch
through
$$
\rank
\begin{pmatrix}
	(\lambda I-A)V&Z_\mathcal I(\Theta)
\end{pmatrix}
=
\rank
\begin{pmatrix}
	V&Z_\mathcal I(\Theta)
\end{pmatrix},
\quad
\lambda\in\mathbb C,
$$
where
$$
\im V=
\ker
\begin{pmatrix}
	C\\CA
\end{pmatrix}.
$$
Exact evaluation of condition~$(S)$ over all condition-$(A)$ branches
obtained from the six charts of $\operatorname{Gr}(2,4)$ shows that no
branch satisfies condition~$(S)$. Since these six charts cover all
two-dimensional subspaces of
$$
\ker
\begin{pmatrix}
	C\\L
\end{pmatrix},
$$
this proves that no order-$4$ functional observer exists. Consequently,
$$
4\notin\Sigma_{FO}(L).
$$
Hence feasibility need not occur even at the first order above the
algebraic lower bound.

\subsection{Feasibility at $q=q_0+2=5$}

Consider now

$$
q=q_0+2=5,
\qquad
d=n-p-q=1.
$$

A feasible one-dimensional subspace is generated by

$$
Z_5=
\begin{pmatrix}
	1&1&1&0&-2&1&0&0
\end{pmatrix}^T.
$$
Indeed,

$$
\rank Z_5=1,
\qquad
CZ_5=0,
\qquad
LZ_5=0.
$$
Moreover,

\[
CAZ_5=
\begin{pmatrix}
	-3\\-1
\end{pmatrix},
\]
which has full column rank, so that

$$
\ker(CAZ_5)=\{0\}
$$
and $R_{Z_5}$ has no columns. Condition~$(A)$ therefore reduces to

$$
\rank
\begin{pmatrix}
	Z_5
\end{pmatrix}
=1,
$$
which holds trivially.

For condition~$(S)$, choose any full-column-rank matrix $V$ satisfying
$$
\im V=
\ker
\begin{pmatrix}
	C\\CA
\end{pmatrix},
\qquad
\rank V=4.
$$
For the present $Z_5$,

$$
h=
\rank
\begin{pmatrix}
	V&Z_5
\end{pmatrix}
=5.
$$
Consider

$$
P_5(\lambda)
=
\begin{pmatrix}
	(\lambda I-A)V&Z_5
\end{pmatrix}.
$$
Let $\Delta_j(\lambda)$ denote its nonzero $5\times5$ minors. Exact
symbolic computation with rational coefficients gives

$$
\gcd_\lambda\{\Delta_j(\lambda)\}=1.
$$
Hence the maximal minors have no common zero in $\mathbb C$, and
therefore

$$
\rank
\begin{pmatrix}
	(\lambda I-A)V&Z_5
\end{pmatrix}
=
\rank
\begin{pmatrix}
	V&Z_5
\end{pmatrix}
=5,
\quad
\lambda\in\mathbb C.
$$
Condition~$(S)$ is therefore satisfied, and

$$
5\in\Sigma_{FO}(L).
$$
Since the only admissible orders below $q=5$ are $q=3$ and $q=4$, both
of which have been proved infeasible, it follows that
$$
\nu_{\min}=5=q_0+2.
$$

\subsection{Existence at the functional-observability endpoint}

It remains to consider

$$
q=n_0-p=6.
$$
At this order,

$$
d=n-p-q=0.
$$
Since $d=0$, the terminal nullspace is trivial. Hence
Theorem~\ref{thm:FO-terminal} gives

$$
6\in\Sigma_{FO}(L).
$$
Combining the four admissible orders therefore gives

$$
\Sigma_{FO}(L)=\{5,6\}
$$
and

$$
\nu_{\min}=5=q_0+2<n_0-p=6.
$$
Equivalently,

$$
q=3:\ \text{infeasible},\qquad
q=4:\ \text{infeasible},
$$
$$
q=5:\ \text{feasible},\qquad
q=6:\ \text{feasible}.
$$

\subsection{Explicit Observer Realization at $q=5$}

We complete the example by exhibiting an explicit order-$5$ realization
and the resulting observer parameters. An explicit $\mathcal L$
satisfying conditions~$(A)$ and~$(S)$ at $q=5$, obtained by augmenting
$L$ with three additional rows spanning a complement of a
one-dimensional admissible $\im Z$ within $\ker\begin{pmatrix}
	C\\L
\end{pmatrix}$, is
\begin{IEEEeqnarray}{rcl}
	\mathcal L
	&\ = &\
	\begin{pmatrix}
		-2&-1&-2&-3&-1&3&2&0\\
		1&-1&-2&1&0&2&0&1\\
		-47/460&1&0&0&0&0&0&0\\
		-137/230&0&1&0&0&0&0&0\\
		313/460&0&0&1&0&0&0&0
	\end{pmatrix}.
	\nonumber
\end{IEEEeqnarray}

\noindent The first two rows of $\mathcal L$ are exactly $L$, so that
$\hat z(t)=\begin{pmatrix}
	\hat z_1(t)\\ \hat z_2(t)
\end{pmatrix}$, the first two components of the
observer's output estimate, recovers an estimate of the original
functional $Lx(t)$; the remaining three rows complete $\mathcal L$ to
order $5$.

With $\mathcal L$ fixed, the classical construction of Darouach
\cite{darouach2000} applies directly to the augmented functional
$z(t)=\mathcal Lx(t)$: writing $\mathcal L^+$ for the Moore--Penrose
pseudoinverse of $\mathcal L$ and
$\bar A=A(I-\mathcal L^+\mathcal L)$, $\bar C=C(I-\mathcal L^+\mathcal
L)$, the matrices $F$ and $G$ of \cite{darouach2000} are computed from
$\Sigma=\bigl(C\bar A;\,\bar C\bigr)$ exactly as in \cite[eq.~(18)--(19)]{darouach2000}.
Choosing the pole-placement gain $Z$ in \cite[eq.~(17)]{darouach2000}
so that
$$
\operatorname{eig}(N)=\{-1,-2,-3,-4,-5\},
$$
the resulting observer
\begin{IEEEeqnarray}{rcl}
\dot w(t) \ &=& \ Nw(t)+Jy(t)+Hu(t) \nonumber \\
\hat z(t)\ &=& \ w(t)+Ey(t) \nonumber 
\end{IEEEeqnarray}
with $B=e_1+e_8$, has parameters (to four decimal places)
\begin{IEEEeqnarray}{rcl}
	N&\ = &\
	\begin{pmatrix}
		-6.4875&5.9071&8.4553&-4.8640&-20.234\\
		12.169&-18.141&-9.5627&5.1427&34.253\\
		1.8872&-2.7798&-3.9721&1.3778&6.3501\\
		-4.3193&5.4644&3.6803&-4.9568&-12.670\\
		6.4681&-8.7816&-6.4945&3.6414&18.557
	\end{pmatrix} \vspace*{2mm}
	\nonumber\\ 
	J&\ = &\
	\begin{pmatrix}
		-3.7981&-5.1655\\
		-41.180&28.916\\
		-6.3270&4.6726\\
		13.788&-10.727\\
		-17.718&14.100
	\end{pmatrix},
	\qquad
	H\ =\
	\begin{pmatrix}
		-1.2552\\
		37.467\\
		1.9827\\
		-11.249\\
		16.003
	\end{pmatrix} \vspace*{2mm}
	\nonumber\\
	E&\ = &\
	\begin{pmatrix}
		-1.0312&2.9504\\
		8.1345&-2.6025\\
		0.74289&-0.81478\\
		-2.7973&1.6668\\
		3.8531&-1.9714
	\end{pmatrix}.
	\nonumber
\end{IEEEeqnarray}
Direct substitution confirms that these parameters satisfy the
Sylvester equation $PA-NP-JC=0$ of \cite[eq.~(3)]{darouach2000}, where
$P=\mathcal L-EC$, to within machine precision (residual norm
$\approx1.4\times10^{-13}$), and $H=PB$ as required by
\cite[eq.~(4)]{darouach2000}. This exhibits, for the present example, a
complete and numerically verified order-$5$ functional observer at the
true minimum order $\nu_{\min}=5$, with eigenvalues placed at
$-1,-2,-3,-4,-5$, realized via the classical Sylvester-equation
construction applied to the augmented functional $\mathcal L$ rather
than to $L$ itself.

\subsection{Interpretation}

Combining the relevant dimensions gives
$$
r=2<q_0=3<\nu_{\min}=5<n_0-p=6.
$$
Unlike an example in which the minimum functional-observer order
coincides with the functional-observability endpoint, here $\nu_{\min}$
is strictly interior to the admissible range: one order,
$q=6$, remains feasible above the minimum. The example therefore
separates all four quantities $r$, $q_0$, $\nu_{\min}$, and $n_0-p$
simultaneously.

The gap between the algebraic lower bound and the actual minimum is
again greater than one: 
$$
q_0=3,\qquad
q_0+1=4,
$$
are both excluded by condition~$(S)$ -- in both cases through the same
underlying mechanism, in which condition~$(A)$'s complete family is
confined to a locus on which the pencil matrix $H$ is provably
full-rank -- whereas
$$
q_0+2=5
$$
is feasible. This confirms that $q_0$ is an algebraic lower bound, not
a guarantee that a functional observer exists either at that order or
at the immediately succeeding order, and that this two-order gap can
occur independently of whether $\nu_{\min}$ coincides with the
functional-observability endpoint.

For the present example, we also apply the functional-observer design
algorithm of Darouach and Fernando~\cite{darouachfernando2025}, which is
based on sufficient solvability conditions. The order-$r$ existence
condition fails at $q=2$. At $q=3,4,$ and $5$, the corresponding
solvability conditions impose nontrivial restrictions on the
pole-placement matrix $N$ and hence fail for a generic choice of $N$.
At $q=6=n-p$, by contrast, $\mathcal O(C,A,4)$ has full column rank, so
the solvability condition is satisfied for arbitrary $N$. Thus the
algorithm of \cite{darouachfernando2025} generically reaches the
reduced-order Luenberger endpoint $q=6$, whereas the present
order-by-order characterization establishes that the true minimum
functional-observer order is $\nu_{\min}=5$.

\section{Discussion and Conclusion} \label{sec:discussion-conclusion}
\subsection{Discussion} \label{sec:discussion}

The characterization separates three questions that are easily conflated in minimum-order observer design.

First, the FO indices determine the smallest order at which the algebraic condition can hold. Second, functional observability implies that algebraic feasibility then persists at every higher order through $n_0-p$. Third, spectral feasibility is a separate fixed-order question. The nullspace pencil \eqref{eq:S-Z-rank} provides the same spectral test at every one of these predetermined orders.

This distinction also avoids a replacement-versus-augmentation difficulty: a higher-order realization $\mathcal L$ may either \emph{augment} a smaller one, in the sense of containing the rows of a particular minimum-order realization as a submatrix (modulo $C$), or it may be a \emph{replacement} realization whose rows bear no such relationship to any specific smaller solution, yet still satisfy condition~(A) at the higher order in their own right. Consequently, parameterizing only one-row extensions of the complete minimum-(A) family would not in general describe all higher-order realizations. The matrix representation $Z=K_LW$, together with the rank conditions
\eqref{eq:chart-testa}--\eqref{eq:chart-testb}, parameterizes the order-$q$ problem
directly and therefore includes replacement realizations automatically.

An important consequence is that $\Sigma_{FO}(L)$ need not inherit the
interval structure of the algebraic spectrum $\Sigma_A(L)$; functional-observer
existence must therefore be determined independently at each admissible order.

For the all-$\lambda$ pole-assignability formulation, additional rank screening can be developed from the common columns represented by $Z$ and $V$. Such screening is useful computationally but is not required for the main equivalence. For the detectability formulation, care is required because rank loss at a strictly stable value of $\lambda$ is permitted.

\subsection{Conclusion}\label{sec:conclusion}

This paper has given an order-by-order characterization of functional
observers between the algebraic lower endpoint and the endpoint determined
by functional observability. The FO indices determine
$q_0=\sum_{j=1}^r\eta_j$, the minimum order for which condition~(A) is
feasible, and the one-row extension theorem shows that condition~(A)
remains feasible at every order $q_0,q_0+1,\ldots,n_0-p$ under functional
observability. At each such order, the fixed-order problem is expressed entirely through
matrix ranks, nullspaces, and column spaces in the original state
coordinates, yielding a complete matrix parameterization of the algebraic
and spectral observer conditions.

Evaluating these conditions over
$q\in\{q_0,\ldots,n_0-p\}$ yields the complete functional-observer
order spectrum $\Sigma_{FO}(L)$ and hence the true minimum order
$$
\nu_{\min}=\min\Sigma_{FO}(L).
$$
As the example in Section~\ref{sec:example} demonstrates, the algebraic
lower bound $q_0$ need not itself be feasible, nor need feasibility occur
at the immediately succeeding order. Thus $q_0$ determines where the
search begins, whereas $\nu_{\min}$ identifies the first order at which
both the algebraic and spectral conditions are satisfied.

The upper endpoint $n_0-p$ is represented by a basis of
$\ker(\mathcal O_C)$; when $n_0=n$, this basis has no columns and the
endpoint reduces to the classical reduced-order Luenberger order $n-p$.
When $q_0=r$, the lower endpoint reduces to Darouach's prescribed order.
Thus the Darouach and Luenberger orders arise as specializations of the two
endpoints of the same matrix-based functional-observer framework, with
$\nu_{\min}$ characterized as the smallest order in this range at which
the algebraic and spectral conditions hold jointly.

	\section*{Biography}

\begin{wrapfigure}{l}{1in}
	\includegraphics[width=1in,height=1.25in,clip,keepaspectratio]{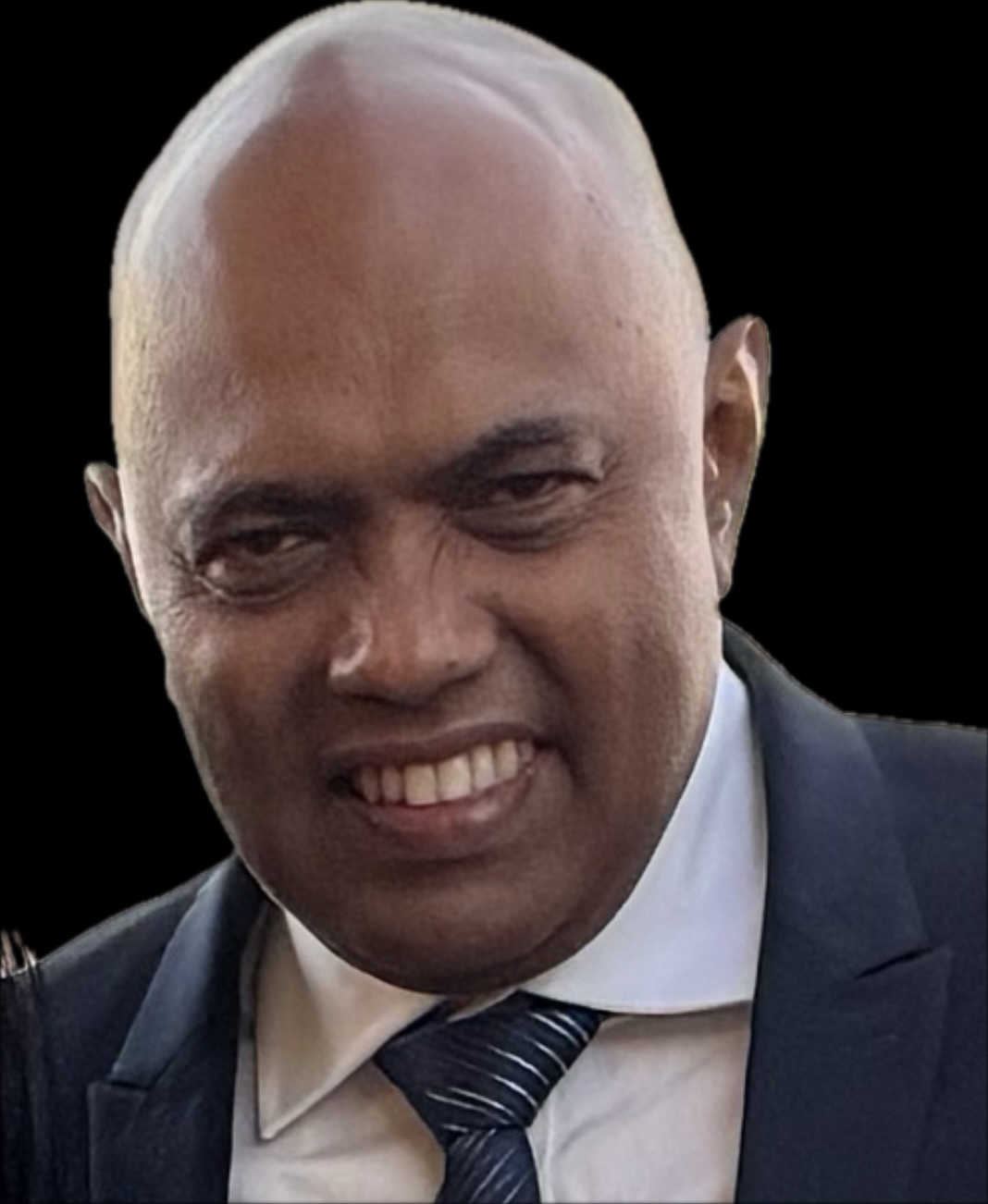}
\end{wrapfigure}
\noindent
\textbf{Tyrone Fernando} received his B.E. (Hons.) and Ph.D. degrees in Electrical Engineering from the University of Melbourne, Victoria, Australia, in 1990 and 1996, respectively. In 1996, he joined the Department of Electrical, Electronic and Computer Engineering, University of Western Australia (UWA), Crawley, WA, Australia, where he currently holds the position of Professor of Electrical Engineering. He previously served as Associate Head and Deputy Head of the department from 2008 to 2010.

Prof. Fernando is currently the Head of the Power and Clean Energy Research Group at UWA. He has provided professional consultancy to Western Power on the integration and management of distributed energy resources in the electric grid. In recognition of his professional contributions, he was named the Outstanding WA IEEE PES/PELS Engineer in 2018. His research interests include theoretical control, observer design, and power system stability and control. He has received multiple teaching awards from UWA in recognition of his contributions to control systems and power systems education.

\end{document}